\theoremstyle{plain}
\newtheorem{theorem}{Theorem}
\newtheorem{lemma}[theorem]{Lemma}
\theoremstyle{definition}
\newtheorem{corollary}[theorem]{Corollary}
\definecolor[named]{ACMBlue}{cmyk}{1,0.1,0,0.1}
\definecolor[named]{ACMYellow}{cmyk}{0,0.16,1,0}
\definecolor[named]{ACMOrange}{cmyk}{0,0.42,1,0.01}
\definecolor[named]{ACMRed}{cmyk}{0,0.90,0.86,0}
\definecolor[named]{ACMLightBlue}{cmyk}{0.49,0.01,0,0}
\definecolor[named]{ACMGreen}{cmyk}{0.20,0,1,0.19}
\definecolor[named]{ACMPurple}{cmyk}{0.55,1,0,0.15}
\definecolor[named]{ACMDarkBlue}{cmyk}{1,0.58,0,0.21}
\definecolor[named]{ACMGrey}{rgb}{0.5, 0.5, 0.5} 
\definecolor{green}{rgb}{0,0.6,0}
\definecolor{fillblack}{rgb}{0.95,0.95,0.2195}
\def\final{0}  
\def\iflong{\iffalse}
\newcommand{\mnote}[1]{{\color{orange}[{\tiny \textbf{Malte:} \bf #1}]\marginpar{\color{orange}*}}}
\newcommand{\jnote}[1]{{\color{red}[{\tiny \textbf{Julian:} \bf #1}]\marginpar{\color{red}*}}}
\newcommand{\fnote}[1]{{\color{blue}[{\tiny \textbf{Finn:} \bf #1}]\marginpar{\color{blue}*}}}
\newcommand{\mnote}[1]{}
\newcommand{\jnote}[1]{}
\newcommand{\fnote}[1]{}
\title{The Line Traveling Salesman and Repairman Problem with Collaboration}
\date{}
\author{Julian Golak\thanks{Corresponding author}}
\author{Malte Fliedner}
\author{Finn Sörensen}
\affil{{\footnotesize Institute of Operations Management, University of Hamburg Business School, Hamburg, Germany. }}
\begin{document}
	\maketitle
	\begin{abstract}
		
		\noindent In this work, we consider extensions of both the Line Traveling Salesman and Line Traveling Repairman Problem, in which a single server must service a set of clients located along a line segment under the assumption that not only the server, but also the clients can move along the line and seek to collaborate with the server to speed up service times. 
        
		We analyze the structure of different problem versions and identify hard and easy subproblems by building up on prior results from the literature. Specifically, we investigate problem versions with zero or general processing times, clients that are either slower or faster than the server, as well as different time window restrictions. 
        Collectively, these results map out the complexity landscape of the Line Traveling Salesman and Repairman Problem with collaboration.
		\medskip
		
		\noindent \textbf{Keywords:} 
		Traveling Salesman Problem; Traveling Repairman Problem; Collaboration; Computational Complexity; Scheduling; Algorithms
	\end{abstract}
		
	\section{Introduction and problem statement}
    Modern production and logistics systems are more and more characterized by an increased use of mobile robots that can autonomously move to their target locations in order to carry out designated tasks. If several of these robots are working together, either by supporting each other on specific tasks or by supplying each other with goods or tools, then new coordination problems arise that need to identify rendezvous positions and schedules in order to maximize the system's efficiency. This practical trend thus leads to relevant extensions of well-known optimization problems which can become research fields in their own right, as can for instance seen in the now famous Traveling Salesman Problem with sidekicks first introduced in \cite{MURRAY2015}. 

    In this work we seek to introduce the concept of collaboration among mobile units into the Linear Traveling Salesman (LTSP) and the Linear Traveling Repairman Problem (LTRP). Both are relevant subcases of the Traveling Salesman and Repairman Problem respectively, where all visited clients are located on a line. These cases are of theoretical interest because they reveal aspects of the combinatorial structure of routing problems when movement is simplified to a single dimension, but they are sometimes also of practical relevance, for instance in warehousing applications when movement is restricted to linear or semi-linear structures (e.g. see \cite{YANG2020}). Since warehousing is among those fields which has seen an increased use of mobile robots in recent years, we will study two classes of routing problem which can arise when a single robot (the server) has to collect items from a set of supplying robots \(A\) with \(n \coloneqq |A|\) (the clients) as efficiently as possible, while all movements occur along a line.
    
    For this purpose, each \emph{client} \(a\) is characterized by a \emph{time window}, defined by a \emph{release date} \(r(a) \in \mathbb{R}_+\) and a \emph{deadline} \(d(a) \in \mathbb{R}_+\). 
    At the release date, client \(a\in A\) is located at position \(s(a) \in \mathbb{R}\) and--without loss of generality--can either move along the line at speed \(v \in \mathbb{R}_+\) or remain at its position. 
    Additionally, each client has a specified \emph{processing time} \(\tau(a) \in \mathbb{R}_+\). 
    The \emph{server} starts at time \(0\) from the origin and can move along the line segment at unit speed or remain at its position. An input instance \(\mathcal{I}\) is given by the tuple \((A, v, (s(a), r(a), d(a), \tau(a))_{a \in A})\). 
    Further, we define \(L \coloneqq \{a \in A \mid s(a) \leq 0 \}\) and \(R \coloneqq \{a \in A \mid s(a) > 0 \}\). Let \(n_L \coloneqq |L|\) and \(n_R \coloneqq |R|\). 
	
	We aim to find \emph{rendezvous positions} between the server and the clients as well as a schedule which determines at what time the rendezvous is supposed to be carried out to maximize efficiency. 
    Define time \(t\colon A \to \mathbb{R}_+\) and position \(x \colon A \to \mathbb{R}\) such that the pair \((t(a), x(a))\) defines the rendezvous between the server and client \(a\in A\). A solution is thus represented by the pair \((t, x)\). 
	Given a solution \((t, x)\), we define the vectors \( \mathbf{t} = (t_0, t_1, \ldots, t_{n+1}) \) and \( \mathbf{x} = (x_0, x_1, \ldots, x_{n+1}) \), where for \(1 \leq i \leq n\), the pair \((t_i, x_i)\) represents the time and position of the server's \(i^{th}\) rendezvous. 
    The pair \((t_0, x_0)\) represents the server's starting time and position, while the pair \((t_{n+1}, x_{n+1})\) represents its ending time and position. 
    In the following, we will assume that the server starts and ends at the same position after processing all clients, requiring \(x_0 = x_{n+1} = 0\) for a feasible solution.
    Finally, we define sequence \( \mathbf{\sigma} = (\sigma_1, \ldots, \sigma_{n+1}) \), where \(\sigma_i\) represents client \(a\) that is processed in the \(i^{th}\) rendezvous. 
    Observe that the tuple \((\mathbf{\sigma}, \mathbf{t}, \mathbf{x})\) uniquely defines a solution \((t,x)\) and vice versa.
	
	When the server serves a client, it must remain at the rendezvous position for the duration of the processing time, during which the server cannot serve any other client. 
	For \(0 \leq i \leq n+1\), the completion time of the \(i^{\text{th}}\) rendezvous, denoted as \(c_i\), is defined as follows: \(c_i = 0\) if \(i = 0\); \(c_i = t_i + \tau(\sigma_i)\) for \(1 \leq i \leq n\); and \(c_i = t_{n+1}\) if \(i = n+1\).
	We refer to a rendezvous \(1 \leq i \leq n+1\) as \emph{reachable by the server} if \( t_i \geq c_{i-1} + |x_i - x_{i-1}| \). 
    Similarly, we refer to a rendezvous with client \(a \in A\) \emph{reachable by the client} if \( t(a) \geq r(a) + |x_i - s(a)|/v \).
	A solution \((t,x)\) is \emph{feasible} if the server meets every client in \(A\) and every meeting is reachable by both the server and the corresponding client and the server returns to its initial position and if \(r(a) \leq t(a) \leq d(a)\) for all \(a \in A\). 
	For a feasible solution \((t,x)\), the \emph{makespan} is \( C_{\text{max}}(t,x) = c_{n+1} \), and the sum of completion times is \( C_{\text{sum}}(t,x) = \sum^n_{i = 1} c_i \). 
    The objective is to find a feasible solution that minimizes either the makespan or the sum of completion times, which we refer to as an \emph{optimal} solution. 
    When the aim is to minimize the makespan, we refer to the problem as the \emph{Line Traveling Salesman Problem with collaboration (CLTSP)}. 
    Conversely, when the aim is to minimize the sum of completion times, we refer to it as the \emph{Line Traveling Repairman Problem with collaboration (CLTRP)}. 
    Additionally, we refer to the \emph{feasibility problem} as the problem of computing a feasible solution for both the CLTSP and CLTRP.
	
	Given a feasible solution, we define the \emph{trajectory} of the server as a piecewise linear curve in the time-space plane that intersects each point \( (t_i, x_i) \) for \( i = 0, \ldots, n+1 \). 
    We assume that, after completing a meeting, the server travels at unit speed to the next meeting position and waits if the client has not yet arrived. Similarly, clients travel at speed \(v\) to their meeting positions and wait if the server has not yet arrived. 
    Observe that the trajectories are uniquely defined.
	
	In the following, we define various restrictions on the problem’s assumptions. Depending on whether the clients or the server move faster, we distinguish two cases:
	\begin{itemize}
		\item[(A1)] \emph{Fast clients.} The clients move at a speed that is at least as fast as the server, i.e., \(v \geq 1\).
		\item[(A2)] \emph{Slow clients.} The clients move at a speed slower than the server, i.e., \(0 < v < 1\).
	\end{itemize} 
	Following classical scheduling literature, we distinguish between four cases based on how each end of the time window interval is bounded or unbounded:
	\begin{itemize}
		\item[(B1)] \emph{No time constraints.} The time windows of the clients impose no restrictions, i.e., \(r(a) = 0\) and \(d(a) = \infty\) for all \(a \in A\).
		\item[(B2)] \emph{Release times.} Only release times are considered, with no deadlines, i.e., \(d(a) = \infty\) for all \(a \in A\).
		\item[(B3)] \emph{Deadlines.} Only deadlines are considered, with no release times, i.e., \(r(a) = 0\) for all \(a \in A\).
		\item[(B4)] \emph{Time windows.} The most general case, where no assumptions are made about release times or deadlines.	
	\end{itemize}
	Finally, there are two cases regarding processing times for each client:
	\begin{itemize}
		\item[(C1)] \emph{Zero-processing times.} The service of a client is instantaneous, i.e., \(\tau(a) = 0\) for all \(a\in A\).
		\item[(C2)] \emph{General processing times.} The processing time of a client has no restrictions.
	\end{itemize}
	
    \paragraph{Previous work:}
    In terms of mobile clients, prior research has specifically focused on routing problems, where clients move on predefined and fixed trajectories, see e.g. \cite{helvig2003moving, stieber2015multiple, stieber2022dealing}. However, in our setting, the trajectories of the customers are subject to the decisions of the system operator which significantly changes the structure of the underlying optimization problems. 
    Surprisingly, the study of the routing problems with collaborative clients has received little attention thus far. To the best of our knowledge, Gambella, Naoum-Sawaya and Ghaddar \cite{Gambella2018} conducted the first comprehensive study on routing problems involving collaboration. In the context of ride-sharing, the authors formulated a vehicle routing problem in which a fleet of vehicles must pick up a set of customers. In this model, both vehicles and customers can move freely within a Euclidean plane. The aim is to find time and positions for rendezvous between vehicles and clients, such that the sum of total completion times is minimized. The authors provide exact solution methods based on decomposition techniques. 
    The research was further extended by Zhang et al.~\cite{Zhang2023}, who also investigated a vehicle routing problem with collaboration in the context of ride-sharing. The authors provided a more rigorous analysis by developing geometric solutions for specific cases with only a single vehicle and a single customer. 
    These solutions were then used to construct exact methods for solving the problem, and the authors derived managerial insights through extensive numerical experiments, utilizing both synthetic and real-world data.
	
    Due to their significance in the combinatorial optimization literature, the Traveling Salesman Problem and the Traveling Repairman Problem have both been studied extensively and various problem versions have been introduced and investigated. Since our work seeks to extend the research on those settings where all clients are located on a line, we focus on the literature that studies the LTSP and LTRP specifically in the following. All known results with respect to time-complexity are also summarized in Table~\ref{table:noCollab}.
    
    In the setting without time constraints, the LTSP becomes trivial, as an optimal solution involves the server moving first in one direction and then in the other.
    In the TRP, when there are no time constraints and zero-processing times, Afrati et al. \cite{afrati1986complexity} developed a simple quadratic-time algorithm based on a dynamic programming approach. This algorithm was later improved to a linear-time solution by Garcia, Jodra, and Tejel \cite{garcia2002note}. 
    When there are no time constraints and general processing time, the problem remains unsolved. 

    For the TSP and assuming only release time and zero-processing times, Psarafitis et al.~\cite{psaraftis1990routing} developed a quadratic-time algorithm.
    In the TRP and assuming the same restrictions, Sitters~\cite{sitters2004complexity} proved that the problem is at least binary NP-hard, yet there is no known pseudo-polynomial time algorithm.
    When considering release times and general processing times, Tsitsiklis \cite{tsitsiklis1992special} proved that the TSP is binary NP-hard, and the results in Lenstra, Kan and Brucker \cite{lenstra1977complexity} imply that the TRP is strongly NP-hard in this case.

    Under the assumption of only deadlines and zero-processing times, Garcia, Jodra, and Tejel \cite{garcia2002note} developed a linear-time algorithm for solving the feasibility problem. Tsitsiklis \cite{tsitsiklis1992special} proposed a quadratic-time algorithm for the TSP, while Afrati et al. \cite{afrati1986complexity} demonstrated that the TRP is binary NP-hard.
    Considering only deadlines and general processing times, Bock and Klamroth \cite{bock2013minimizing} proved that the TSP is binary NP-hard, while Bock~\cite{bock2015solving} further showed that the TRP is strongly NP-hard.
    Finally, in the case of time windows, Tsitsiklis \cite{tsitsiklis1992special} proved that the feasibility problem without processing times is strongly NP-hard, which implies the same level of hardness for both the TSP and TRP, with or without general processing times.
	
	\paragraph{Our Contribution:}
	
	With this article, we aim to introduce collaboration into the formal structure of two fundamental classes of routing problems and seek to lay the basis of a foundational complexity landscape for these problems. 
    In order to do this, we investigate different problem versions and either establish NP-completeness, present a polynomial-time algorithm or identify it as an open question. The results are summarized in Table~\ref{table:results}. 
    We further seek to highlight the differences between special cases with and without collaboration, emphasizing that certain problems proven to be hard in the non-collaborative setting become tractable with collaboration and vice versa. 
    Furthermore, some problem versions whose time complexity remains unresolved without collaboration can be classified by leveraging the collaborative structure.
	
	\medskip
	
	\noindent The remainder of the paper is structured as follows. In Section~\ref{sec:characterisingOpt}, we establish structural properties of optimal solutions across the problem variants. 
    In Section~\ref{sec:algorithmsTSP}, we present algorithms for the LTSP. 
    In Section~\ref{sec:algorithmsTRP}, we present algorithms for the LTRP. 
    Finally, we establish complexity lower bounds in Section~\ref{sec:hardness}.
	
	\paragraph{Basic Notations.} We denote the sets of \emph{real numbers}, \emph{non-negative real numbers}, \emph{integers}, and \emph{non-negative integers} by \(\mathbb{R}\), \(\mathbb{R}_+\), \(\mathbb{Z}\), and \(\mathbb{Z}_+\), respectively.
	
	\begin{table}[t!]
		\scriptsize
		\centering
		\renewcommand*{\arraystretch}{1.3}
		\setlength{\tabcolsep}{3pt}
		\newcommand{\tableentrybase}[2]{\colorbox{#1}{\parbox[c][3.5em][c]{18mm}{\centering\scriptsize #2}}}
		\newcommand{\tableentrybasee}[2]{\colorbox{#1}{\parbox[c][1.5em][c]{18mm}{\centering\scriptsize #2}}}
		\newcommand{\tableentrybaseee}[2]{\colorbox{#1}{\parbox[c][3.5em][c]{19mm}{\centering\scriptsize #2}}}
		\newcommand{\tableentrybaseeee}[2]{\colorbox{#1}{\parbox[c][1.5em][c]{10mm}{\centering\scriptsize #2}}}
		\newcommand{\tableentrybaseeeee}[2]{\colorbox{#1}{\parbox[c][3.5em][c]{10mm}{\centering\scriptsize #2}}}
		
		\caption{Complexity landscape of LTSP and LTRP without collaboration under time restrictions. In each column and row, complexities of cells with light shade are implied by cells with darker shade. \label{table:noCollab}}
		\begin{tabular}{cc|ccc|ccc}
			&  & \multicolumn{3}{c}{No Processing Time} & \multicolumn{3}{c}{General Processing Time}\\ 
			&  & Feasibility & TSP & TRP & Feasibility & TSP & TRP\\ \cmidrule(lr){1-2}\cmidrule(lr){3-3}\cmidrule(lr){4-4}\cmidrule(lr){5-5}\cmidrule(lr){6-6}\cmidrule(lr){7-7}\cmidrule(lr){8-8}\\[-1em]
			\rotatebox[origin=c]{90}{No Time Constraints} & & 
			\tableentrybaseee{ACMGreen!0}{\tableentrybase{ACMGrey!10}{Trivial}} & 
			\tableentrybaseee{ACMGreen!0}{\tableentrybase{ACMGrey!10}{Trivial}} & 
			\tableentrybaseee{ACMRed!0}{\tableentrybase{ACMGreen!50}{\(O(n) \) \cite{garcia2002note}}} & 
			\tableentrybaseee{ACMGreen!0}{\tableentrybase{ACMGrey!10}{Trivial}} & 
			\tableentrybaseee{ACMGreen!0}{\tableentrybase{ACMGrey!10}{Trivial}}& 
			\tableentrybaseee{ACMRed!0}{\tableentrybase{ACMGreen!00}{Open}} \\ [3em]
			
			\rotatebox[origin=c]{90}{Release Times} & & \tableentrybaseee{ACMGreen!0}{\tableentrybase{ACMGrey!10}{Trivial}} & \tableentrybaseee{ACMRed!0}{\tableentrybase{ACMGreen!50}{\(O(n^2)\) \cite{psaraftis1990routing}}} &  \tableentrybaseee{ACMRed!0}{\tableentrybase{ACMRed!50}{Binary NP-hard \cite{sitters2004complexity}}} & \tableentrybaseee{ACMGreen!0}{\tableentrybase{ACMGrey!10}{Trivial}} & \tableentrybaseee{ACMRed!0}{\tableentrybase{ACMRed!50}{Binary NP-hard \cite{tsitsiklis1992special}}}&  \tableentrybaseee{ACMRed!0}{\tableentrybase{ACMRed!50}{Strongly NP-hard \cite{lenstra1977complexity}}}
			\\ [3em]
			
			\rotatebox[origin=c]{90}{Deadlines} && 
			\tableentrybaseee{ACMRed!0}{\tableentrybase{ACMGreen!50}{\(O(n)\) \cite{garcia2002note}}} & 
			\tableentrybaseee{ACMRed!0}{\tableentrybase{ACMGreen!50}{\(O(n^2)\) \cite{tsitsiklis1992special}}} & 
			\tableentrybaseee{ACMGreen!0}{\tableentrybase{ACMRed!50}{Binary NP-hard \cite{afrati1986complexity}}} & 
			\tableentrybaseee{ACMGreen!0}{\tableentrybase{ACMRed!0}{Open}} & 
			\tableentrybaseee{ACMGreen!0}{\tableentrybase{ACMRed!50}{Binary NP-hard \cite{bock2013minimizing}}} & 
			\tableentrybaseee{ACMGreen!0}{\tableentrybase{ACMRed!50}{Strongly NP-hard \cite{bock2015solving}}} \\ [3em]
			
			\rotatebox[origin=c]{90}{Time Windows} && 
			\tableentrybaseee{ACMGreen!0}{\tableentrybase{ACMRed!50}{Strongly NP-hard \cite{tsitsiklis1992special}}} & 
			\tableentrybaseee{ACMGreen!0}{\tableentrybase{ACMRed!20}{Strongly NP-hard}} &  
			\tableentrybaseee{ACMGreen!0}{\tableentrybase{ACMRed!20}{Strongly NP-hard}}  & 
			\tableentrybaseee{ACMGreen!0}{\tableentrybase{ACMRed!20}{Strongly NP-hard}} & 
			\tableentrybaseee{ACMGreen!0}{\tableentrybase{ACMRed!20}{Strongly NP-hard}} & 
			\tableentrybaseee{ACMGreen!0}{\tableentrybase{ACMRed!20}{Strongly NP-hard}}
		\end{tabular}
	\end{table}
	
	\begin{table}[t!]
		\scriptsize
		\centering
		\renewcommand*{\arraystretch}{1.3}
		\setlength{\tabcolsep}{3pt}
		\newcommand{\tableentrybase}[2]{\colorbox{#1}{\parbox[c][3.5em][c]{18mm}{\centering\scriptsize #2}}}
		\newcommand{\tableentrybasee}[2]{\colorbox{#1}{\parbox[c][1.5em][c]{18mm}{\centering\scriptsize #2}}}
		\newcommand{\tableentrybaseee}[2]{\colorbox{#1}{\parbox[c][3.5em][c]{19mm}{\centering\scriptsize #2}}}
		\newcommand{\tableentrybaseeee}[2]{\colorbox{#1}{\parbox[c][1.5em][c]{10mm}{\centering\scriptsize #2}}}
		\newcommand{\tableentrybaseeeee}[2]{\colorbox{#1}{\parbox[c][3.5em][c]{10mm}{\centering\scriptsize #2}}}
		
		\caption{Complexity landscape of CLTSP and CLTRP. In each column and row, complexities of cells with light shade are implied by cells with darker shade.  The constant \(T\) is defined later and represents a value that is bounded by a linear function of the input size under unary encoding. \label{table:results}}
		\begin{tabular}{cc|ccc|ccc}
			& Speed & \multicolumn{3}{c}{No Processing Time} & \multicolumn{3}{c}{General Processing Time}\\ 
			&  & Feasibility & TSP & TRP & Feasibility & TSP & TRP\\ \cmidrule(lr){1-2}\cmidrule(lr){3-3}\cmidrule(lr){4-4}\cmidrule(lr){5-5}\cmidrule(lr){6-6}\cmidrule(lr){7-7}\cmidrule(lr){8-8}\\[-1em]
			\rotatebox[origin=c]{90}{  No Time Constraints} 
			& \tableentrybaseeeee{ACMRed!0}{
				\tableentrybaseeee{ACMRed!0}{Slow} \linebreak 
				\tableentrybaseeee{ACMRed!0}{Fast}
			} 
			& \tableentrybaseee{ACMRed!0}{
				\tableentrybase{ACMGrey!10}{Trivial}
			} 
			& \tableentrybaseee{ACMRed!0}{
				\tableentrybasee{ACMGreen!50}{\(O(n)\) \linebreak Thm. \ref{thm:TSPNoProcSlow}} \linebreak 
				\tableentrybasee{ACMGreen!20}{\(O(n )\)}
			} 
			& \tableentrybaseee{ACMRed!0}{
				\tableentrybasee{ACMGreen!20}{\(O((Tn)^2)\)} \linebreak 
				\tableentrybasee{ACMGreen!0}{Open}
			}
			& \tableentrybaseee{ACMRed!0}{
				\tableentrybase{ACMGrey!10}{Trivial}
			} 
			&\tableentrybaseee{ACMRed!0}{
				\tableentrybasee{ACMGreen!50}{\(O(n^3)\) \linebreak Thm. \ref{thm:TSPGenProcUnconstrained}} \linebreak 
				\tableentrybasee{ACMGreen!20}{\(O(n)\)}
			} 
			& \tableentrybaseee{ACMRed!0}{
				\tableentrybase{ACMGreen!00}{Open}
			} \\[3em]
			
			\rotatebox[origin=c]{90}{Release Times} & \tableentrybaseeeee{ACMRed!0}{\tableentrybaseeee{ACMRed!0}{Slow} \linebreak \tableentrybaseeee{ACMRed!0}{Fast}}& \tableentrybaseee{ACMGreen!0}{\tableentrybase{ACMGrey!10}{Trivial}} & \tableentrybaseee{ACMRed!0}{\tableentrybasee{ACMGrey!0}{Open}\linebreak \tableentrybasee{ACMGreen!20}{\(O(n)\)}} &  \tableentrybaseee{ACMGreen!0}{{\tableentrybasee{ACMRed!50}{Bin. NP-hard 
            Cor.~\ref{cor:CLTRP}
            }\linebreak \tableentrybasee{ACMBlue!0}{Open}}} & \tableentrybaseee{ACMGreen!0}{\tableentrybase{ACMGrey!10}{Trivial}} & \tableentrybaseee{ACMRed!0}{\tableentrybasee{ACMRed!50}{Bin. NP-hard 
            Cor.~\ref{cor:CLTSP}
            }\linebreak \tableentrybasee{ACMGreen!50}{\(O(n)\) \linebreak Thm. \ref{thm:TSPGenReleaseFast}}}  &  \tableentrybaseee{ACMRed!0}{\tableentrybasee{ACMRed!20}{Bin. NP-hard}\linebreak \tableentrybasee{ACMBlue!0}{Open}}\\ [3em]
			
			\rotatebox[origin=c]{90}{Deadlines} & \tableentrybaseeeee{ACMRed!0}{\tableentrybaseeee{ACMRed!0}{Slow} \linebreak \tableentrybaseeee{ACMRed!0}{Fast}} & \tableentrybaseee{ACMGreen!20}{O(n)} & \tableentrybaseee{ACMRed!0}{\tableentrybasee{ACMGreen!50}{\(O(n^2)\) \linebreak Thm.~\ref{thm:TSPNoProcDeadlines}} \linebreak \tableentrybasee{ACMGreen!20}{\(O(n)\)}} & \tableentrybaseee{ACMGreen!0}{{\tableentrybasee{ACMGreen!50}{
            \(O((Tn)^2)\) Thm.~\ref{thm:TRP}}
            }\linebreak \tableentrybasee{ACMGreen!0}{Open}} & 
			\tableentrybaseee{ACMGreen!0}{\tableentrybase{ACMRed!50}{Strongly NP-hard \linebreak Thm.~\ref{thm:HardnessStrong}}}
			& \tableentrybaseee{ACMGreen!0}{\tableentrybase{ACMRed!20}{Strongly NP-hard}}&  \tableentrybaseee{ACMGreen!0}{\tableentrybase{ACMRed!20}{Strongly NP-hard}} \\ [3em]
			
			\rotatebox[origin=c]{90}{Time Windows} & \tableentrybaseeeee{ACMRed!0}{\tableentrybaseeee{ACMRed!0}{Slow} \linebreak \tableentrybaseeee{ACMRed!0}{Fast}} & \tableentrybaseee{ACMRed!0}{\tableentrybasee{ACMRed!50}{Strongly NP-hard \cite{tsitsiklis1992special}} \linebreak \tableentrybasee{ACMGreen!20}{\(O(n)\)}} & \tableentrybaseee{ACMRed!0}{\tableentrybasee{ACMRed!20}{Strongly NP-hard} \linebreak \tableentrybasee{ACMGreen!50}{\(O(n)\) \linebreak Thm.~\ref{thm:TSPnoProcTimeWindow}}} &  \tableentrybaseee{ACMRed!0}{\tableentrybasee{ACMRed!20}{Strongly NP-hard}\linebreak \tableentrybasee{ACMBlue!0}{Open}} & \tableentrybaseee{ACMGreen!0}{\tableentrybase{ACMRed!20}{Strongly NP-hard}} & \tableentrybaseee{ACMGreen!0}{\tableentrybase{ACMRed!20}{Strongly NP-hard}}& \tableentrybaseee{ACMGreen!0}{\tableentrybase{ACMRed!20}{Strongly NP-hard}}
		\end{tabular}
	\end{table}

	\section{Structural properties of the Line Traveling Salesman Problem with collaboration} \label{sec:characterisingOpt}
	
	We propose a number of structural results that will be useful for proving the correctness of the algorithms that follow. Specifically, we show that there exists an optimal solution that has a very specific structure.
	
	\begin{lemma}\label{lemma:orderPreservingNoProc}
		If there are zero-processing times and only deadlines and if the instance is feasible, there exists an optimal solution for the CLTSP, such that for any clients \(a, a'\in L\) or \(a, a'\in R\) with \( |s(a)| < |s(a')|\) we have \(t(a) < t(a')\). In other words, the solution is order-preserving.
	\end{lemma}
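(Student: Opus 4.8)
The plan is to fix an optimal solution, keep its server trajectory unchanged, and only re-choose \emph{when} each client is met, namely at the earliest possible instant. Write $X\colon[0,T]\to\mathbb{R}$ for the continuous, speed-at-most-one server trajectory induced by a fixed optimal solution, so that $X(0)=X(T)=0$ and $T$ is the optimal makespan. Since $\tau\equiv 0$ and $r\equiv 0$, a meeting with client $a$ at a point $(t,x)$ on this curve is exactly a point with $|x-s(a)|\le vt$ and $t\le d(a)$; I will call the set of such points the \emph{reachable region} $P_a$ of $a$ (a cone emanating from $(0,s(a))$ with spatial boundaries of slope $\pm v$, truncated at the deadline). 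Because the original solution is feasible, $X$ enters each $P_a$ at some time at most $d(a)$. I would then define the new rendezvous time of $a$ as $\phi(a):=\min\{t:(t,X(t))\in P_a\}$, the first entry, with position $X(\phi(a))$, and order the clients by increasing $\phi$.

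Next I would verify that this yields a feasible and still optimal solution. Each individual meeting is feasible because $(\phi(a),X(\phi(a)))\in P_a$ encodes both client-reachability and the deadline. Server-reachability holds because all rendezvous points lie on the single unit-speed curve $X$, so the server can simply follow $X$ and hit every new rendezvous at its prescribed time. For the makespan, let $t^\dagger=\max_a\phi(a)$ be the last meeting; since $X$ travels from $X(t^\dagger)$ back to the origin during $[t^\dagger,T]$ at unit speed, we have $t^\dagger+|X(t^\dagger)|\le T$, so letting the server return straight to the origin right after the last meeting gives makespan at most $T$, hence exactly $T$ by optimality.

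The core of the argument is order-preservation: for $a,a'\in R$ with $s(a)<s(a')$ (the case $a,a'\in L$ following by the reflection $x\mapsto -x$, under which $L$ becomes $R$) I must show $\phi(a)<\phi(a')$. Put $t'=\phi(a')$; membership $(t',X(t'))\in P_{a'}$ gives $X(t')\ge s(a')-vt'>s(a)-vt'$, i.e.\ at time $t'$ the server lies strictly to the right of the left boundary $\ell_a(t)=s(a)-vt$ of $P_a$, whereas at time $0$ it lies strictly to the left since $X(0)=0<s(a)=\ell_a(0)$ (here $s(a)>0$ as $a\in R$). By the intermediate value theorem there is $t^\ast\in(0,t')$ with $X(t^\ast)=\ell_a(t^\ast)$, a spatial contact with $P_a$.

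The delicate point, and the step I expect to be the main obstacle, is that this contact may fall past $a$'s deadline, so $(t^\ast,X(t^\ast))$ need not lie in the truncated region $P_a$. I would resolve this with a short case split. If $t^\ast\le d(a)$, then $(t^\ast,X(t^\ast))\in P_a$ and hence $\phi(a)\le t^\ast<t'=\phi(a')$. If instead $t^\ast>d(a)$, then since $X$ already met $P_a$ by $a$'s deadline we have $\phi(a)\le d(a)<t^\ast<t'=\phi(a')$. Either way $\phi(a)<\phi(a')$ strictly, which is precisely the asserted order. The two remaining verifications (the makespan bound and the feasibility of the re-timed trajectory) are routine; the genuinely subtle ingredient is this deadline-truncation case analysis, and it is exactly here that the hypothesis that the instance is feasible is used, since it guarantees each region is in fact reached in time so that $\phi(a)\le d(a)$.
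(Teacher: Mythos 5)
Your proof is correct, and it shares its geometric core with the paper's proof: keep the optimal server trajectory fixed and relocate rendezvous to earlier points where the trajectory crosses the line \(s(a)-vt\) bounding client \(a\)'s reachable cone; such a crossing exists by the intermediate value theorem (the paper invokes this implicitly when it asserts the trajectory ``must cross \(f(t)\)''), and moving a meeting earlier preserves feasibility precisely because only deadlines are present. Where you differ is in the organization. The paper runs a local repair loop: it picks an inverted pair \(a,a'\in R\) with \(s(a)<s(a')\) and \(t(a)>t(a')\), moves \(a\)'s rendezvous to the crossing point \((\bar t,\bar x)\) with \(\bar t<t(a')\), and then ``repeats for all relevant elements''; since one such repair can create new inversions (now involving \(a\) and clients nearer the origin that are met after \(\bar t\)), termination of that loop is left implicit. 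You instead perform a single global re-timing, meeting every client at the first entry time \(\phi(a)\) of the trajectory into its cone \(P_a\), and then verify the order property for every pair at once. This one-shot construction sidesteps the termination question, at the price of one extra case the paper never faces: your crossing time \(t^\ast\) may exceed \(d(a)\), which you correctly dispatch via \(\phi(a)\le t(a)\le d(a)<t^\ast\) using feasibility of the given solution, whereas in the paper's setting the crossing automatically respects the deadline because the assumed inversion gives \(\bar t<t(a')<t(a)\le d(a)\). Both arguments are sound; yours is the more self-contained, the paper's the more economical.
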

	\begin{proof}
		Assume the instance is feasible and consider an optimal solution \( (t,x) \) and assume it is not order-preserving. 
		Assume there exists a client \(a \in R\), such that there exists an \(a' \in R\) with \( s(a) < s(a')\) and \(t(a) > t(a')\).  
		Consider lines \(f(t) = s(a) - vt\) and \(g(t) = s(a') - vt\).
		For the rendezvous to be reachable by \(a'\), it must hold that \(g(t(a')) \leq x(a')\).
		But then the trajectory of the server must cross \(f(t)\) at some point \( (\bar{t},\bar{x})\), where \(\bar{t} < t(a')\) and \(\bar{x} < x(a')\).
  
		Define solution \( (t',x')\) with \((t'(a), x'(a)) = (\bar{t},\bar{x})\) and \((t'(a''), x'(a'')) = (t(a''), x(a''))\) for all other \(a'' \in A\).
		Observe that both solutions have the same trajectory and therefore all rendezvous are reachable by the server and the clients.
		Further, we know that \(t'(a) \leq t(a) \leq d(a)\) for all \(a\in A\) and thus  \( (t',x')\) is feasible.
		It follows that \( (t',x')\) is again an optimal solution and we know that \(t'(a') > t'(a)\). If there exists a client \(a \in L\), such that there exists an \(a' \in L\) with \( s(a) > s(a')\) and \(t(a) > t(a')\), a symmetric procedure is used. Repeating this procedure for all relevant elements, results in a solution that satisfies the condition stated in the lemma.
	\end{proof}
	
	\begin{lemma}\label{lemma:orderPreservingGenProc}
		If there are general processing times and no time constraints, then there exists an optimal solution for the CLTSP, such that for any clients \(a, a'\in L\) or \(a, a'\in R\) with \( |s(a)| < |s(a')|\) we have \(t(a) < t(a')\). In other words, the solution is order-preserving.
	\end{lemma}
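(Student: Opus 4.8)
The plan is to reuse the exchange/crossing argument from the proof of Lemma~\ref{lemma:orderPreservingNoProc}, modified to cope with the fact that, under general processing times, serving a client forces the server to halt, so a single reassignment no longer leaves the trajectory literally unchanged. I would start from an optimal solution $(t,x)$ that is not order-preserving and, by the symmetry between the two sides, assume there are clients $a,a'\in R$ with $s(a)<s(a')$ but $t(a)>t(a')$ (the case $a,a'\in L$ being a mirror image). Exactly as before, I would look at the reachability line $f(t)=s(a)-vt$ of the closer client $a$: since the server starts at the origin, it lies below $f$ at time $0$ (as $f(0)=s(a)>0$), whereas reachability of $a'$ forces $x(a')\ge s(a')-vt(a')>s(a)-vt(a')=f(t(a'))$, so the server lies above $f$ at time $t(a')$. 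Continuity of the trajectory then yields a crossing point $(\bar t,\bar x)$ with $\bar t<t(a')$ that lies on both the trajectory and on $f$, and there $a$ is reachable with equality, since $\bar t=\lvert\bar x-s(a)\rvert/v$.

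The genuinely new ingredient, relative to Lemma~\ref{lemma:orderPreservingNoProc}, is that serving $a$ at $(\bar t,\bar x)$ requires a stop of length $\tau(a)$. Instead of keeping the trajectory unchanged, I would \emph{relocate the processing stop} of $a$ along the server's path: insert a stop of duration $\tau(a)$ at $(\bar t,\bar x)$ and delete the stop at $a$'s former rendezvous, which the server now no longer needs to visit for service. The crucial invariants are that the total processing time $\sum_{a}\tau(a)$ is unchanged and that the total amount of work the server must perform between the crossing point and $a$'s former rendezvous is the same; consequently the server can always be made to depart $a$'s old location, and hence reach every subsequent rendezvous, no later than in the original schedule. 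I would therefore \emph{hold all rendezvous at or after $a$'s old position at their original times} (adding waiting where needed, which is free here), so the makespan is exactly preserved. The rendezvous on the affected intermediate segment are only \emph{delayed}, and under assumption (B1) there are no deadlines, so every such meeting remains feasible and every affected client, being reached later, is still reachable; $a'$ in particular keeps its position and is met at a time $\ge t(a')$, hence remains reachable. The offending pair is now ordered, since $t'(a)=\bar t<t(a')\le t'(a')$, and since the original solution was optimal and the modified one is feasible with equal makespan, it is optimal as well.

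To pass from a single exchange to the global statement I would iterate, at each step selecting an optimal but non-order-preserving solution and applying the move, using a potential function to guarantee termination — for instance the number of inverted pairs, or a lexicographic comparison of the rendezvous-time vector. Here some care is needed in the \emph{choice} of the exchanged pair: because the move delays the intermediate rendezvous, one should select a pair that is consecutive in the service order of its own side (and, if necessary, take the latest crossing of $f$ before $t(a')$) so that these delays cannot introduce fresh inversions and the potential strictly decreases.

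The step I expect to be the main obstacle is the makespan bookkeeping for the relocated stop. In the zero-processing setting of Lemma~\ref{lemma:orderPreservingNoProc} the trajectory was preserved verbatim; here I must instead verify rigorously that inserting a stop of length $\tau(a)$ earlier on the path, deleting it at the old location, and re-synchronising the tail does not increase the total waiting time and hence the return time. I would handle this by comparing the total ``travel-plus-processing'' work on the affected segment before and after the move (which is invariant), by using that the inserted stop can only delay — never advance — the server on that segment, and by invoking the freedom under (B1) to pad with waiting so that all timings from $a$'s old position onward match the original exactly, which pins the makespan at the optimum.
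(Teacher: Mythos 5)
Your proposal matches the paper's proof in all essentials: the same crossing-point argument as in Lemma~\ref{lemma:orderPreservingNoProc}, followed by relocating the rendezvous of $a$ (together with its processing stop of length $\tau(a)$) to $(\bar t,\bar x)$, delaying the intermediate rendezvous by $\tau(a)$, leaving everything from $a$'s old completion time onward unchanged, and using the absence of deadlines to keep the delayed meetings feasible, then iterating. If anything you are more careful than the paper, whose construction shifts only the rendezvous with times in $(t(a'),t(a))$ rather than all of those in $(\bar t,t(a))$, and which dismisses termination of the iteration with ``repeating this procedure''---precisely the point you flag as requiring a potential-function or farthest-client-first argument.
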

	\begin{proof}
        Observe that there always exists an optimal solution.
		Consider an optimal solution \( (t,x) \). 
		Assume there exists a client \(a \in R\), such that there exists an \(a' \in R\) with \( s(a) < s(a')\) and \(t(a) > t(a')\).  
        If there do not exist such clients, we are done.
		Consider lines \(f(t) = s(a) - vt\) and \(g(t) = s(a') - vt\).
		But then the trajectory of the server must cross \(f(t)\) at some point \( (\bar{t},\bar{x})\), where \(\bar{t} < t(a')\) and \(\bar{x} < x(a')\).
  
		Define solution \( (t',x')\) by defining \((t'(a), x'(a)) = (\bar{t}, \bar{x})\) and for all \(a'' \in A \setminus \{a\}\) 
		\[
		(t'(a''), x'(a'')) =
		\begin{cases} 
			(t(a'') + \tau(a), x(a'')), & \text{if } t(a') < t(a'') < t(a), \\
			(t(a''), x(a'')), & \text{otherwise}.
		\end{cases}
		\]
		Observe that, by construction, all rendezvous are reachable by the server and the clients.
		It follows that \( (t',x')\) is again an optimal solution and we know that \(t'(a') > t'(a)\). If there exists a client \(a \in L\), such that there exists an \(a' \in L\) with \( s(a) > s(a')\) and \(t(a) > t(a')\), a symmetric procedure is used. Repeating this procedure for all relevant elements, results in a solution that satisfies the condition stated in the lemma.
	\end{proof}
	
	\begin{lemma}\label{lemma:waitFreeServer}
		If there are general processing times and only deadlines and clients are slow and if the instance is feasible, then given an optimal solution \((t, x)\) for the CLTSP, we have that \(t_{i} = c_{i-1} + |x_{i} - x_{i-1}|\) for all \(1 \leq i \leq n+1\). In other words, the server is wait-free. 
	\end{lemma}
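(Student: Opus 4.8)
The plan is to argue by contradiction. I take an arbitrary optimal solution $(t,x)$ and assume the server waits at some rendezvous, i.e. $t_i > c_{i-1} + |x_i - x_{i-1}|$ for some $i$, and then build a feasible solution of strictly smaller makespan. Two reductions come for free from the model. The return leg $i=n+1$ is automatically wait-free: there is no client at the origin, so under the convention that the server moves at unit speed and only waits when a client has not yet arrived, it never idles there and $t_{n+1}=c_n+|x_n|$. By the same convention every genuine wait at a step $1\le i\le n$ is \emph{forced by its client}, i.e. \emph{client-tight}: we have $t_i = |x_i - s(\sigma_i)|/v$ while $c_{i-1}+|x_i-x_{i-1}| < t_i$. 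In particular, at such a step the completion time $c_i = |x_i - s(\sigma_i)|/v + \tau(\sigma_i)$ depends only on the rendezvous position $x_i$ and not on where the server came from.

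The engine of the proof is a single local move that exploits $v<1$. Let $m$ be the \emph{last} client-tight rendezvous; then every step $j>m$ has strict slack, $t_j > |x_j - s(\sigma_j)|/v$, hence is wait-free with room to spare. Suppose first that $m$ is a genuine wait. While $m$ stays client-limited and the downstream steps stay wait-free, the makespan equals $|x_m - s(\sigma_m)|/v + \sum_{j\ge m}\tau(\sigma_j) + \sum_{j=m+1}^{n+1}|x_j - x_{j-1}|$, because at a wait the inbound leg $|x_m-x_{m-1}|$ is not binding and merely shortens the server's idle time. I now shift $x_m$ by a small $\eta>0$ toward the release position $s(\sigma_m)$, keeping all other positions fixed. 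The only affected terms are $|x_m - s(\sigma_m)|/v$, whose slope is $-1/v$, and the single leg $|x_{m+1}-x_m|$, whose slope lies in $[-1,1]$; the net slope of the makespan is therefore at most $1-1/v<0$. For $\eta$ small enough the strictly positive downstream slacks absorb the induced time shift, so every later rendezvous stays reachable by its slow client, all deadlines (being upper bounds) still hold, and the zero release dates impose nothing. This is a strictly better feasible solution, contradicting optimality.

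It remains to treat the boundary case in which the last client-tight rendezvous $m$ is a \emph{corner}, where server and client arrive simultaneously, so $m$ is client-tight but not a genuine wait, while a genuine wait sits at some earlier step $w<m$. A single perturbation of $x_m$ is useless there, since $m$ is already balanced. Instead I first relocate the earlier waiting position $x_w$ toward $s(\sigma_w)$: this lowers $c_w$ at rate $1/v$ and makes the server reach the first client-tight step after $w$ strictly early, converting that corner into a genuine wait at no increase of the makespan, after which the move of the previous paragraph applies and gains strictly. I expect the bookkeeping here to be the main obstacle: one must funnel the time savings through a chain of client-tight rendezvous and argue that the process terminates in one net strict improvement, for which a potential such as the index gap between the last client-tight step and the last genuine wait should work. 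The slope inequality $1-1/v<0$ is exactly what makes each elementary step strict, and the order-preserving structure of optimal solutions can be invoked to fix the relative order of $x_{m-1}$, $x_m$ and $s(\sigma_m)$ and thereby keep the case analysis of the adjacent travel terms finite.
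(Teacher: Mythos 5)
Your proof takes a genuinely different route from the paper's, and its core is sound, but it has one identifiable gap. The paper does not argue by strict improvement at all: given a wait at step $i$, it replaces that single rendezvous by the collision point reached from $(c_{i-1},x_{i-1})$ (server and client moving toward each other at full speed) and leaves every other pair $(t_j,x_j)$ untouched; the only inequality needed is that, since $v<1$, the server can serve the client at this earlier point and still be back at the old position $x_i$ by the old completion time $c_i$, so the tail of the schedule is unaffected and the new solution is \emph{equally} good. Iterating left to right yields an \emph{existence} statement: some optimal solution is wait-free (which is all the later dynamic programs need). Your contradiction argument aims at something stronger: that \emph{every} optimal solution is wait-free, which is what the lemma literally asserts, and the price is exactly the machinery you describe --- the slope bound $1-1/v<0$, shifting the whole tail earlier by a small amount absorbed by the strictly positive downstream client slacks, and the case-2 funneling of a wait forward through corners to the last client-tight step, with the index gap as a terminating potential. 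That machinery does work: downstream of the last client-tight rendezvous every client constraint is strictly slack, the inbound leg at a genuine wait is non-binding, and each funneling step preserves the makespan (everything downstream of the targeted corner is untouched, since that corner's time is client-determined) while strictly increasing the index of the last genuine wait.

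The gap is your claim that every genuine wait is client-tight, $t_i=|x_i-s(\sigma_i)|/v$. This does not follow from the paper's trajectory convention: that convention only fixes trajectories given the rendezvous data, while feasibility is defined by reachability \emph{inequalities}, so an optimal solution may schedule $t_i$ strictly later than both the server's and the client's arrival times (double slack). Your argument as written derives a contradiction only from client-tight waits, so it does not rule out such solutions, and hence does not yet prove the universal statement. The repair is short: lower each offending $t_i$ to the maximum of the two arrival times, sweeping left to right; each reduction preserves feasibility and can only decrease $t_{n+1}$. If $t_{n+1}$ drops, optimality of the original solution is already contradicted; otherwise the cascade of reductions must be stopped by some client-tight constraint, producing an equally good optimal solution with a client-tight genuine wait, to which your engine applies. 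With that normalization added, and the case-2 bookkeeping written out in full (you rightly identify it as the main technical burden), your proof is complete.
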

	\begin{proof}
		The argument is illustrated in Figure~\ref{fig:proofWaitFree1}. Assume the instance is feasible. Consider an optimal solution \( (t, x)\) and assume that there exists an \( i \geq 1 \) such that \( t_i > c_{i-1} + |x_i - x_{i-1}| \). 
        If there does not exist such a rendezvous, we are done. Let \( a = \sigma_{i} \), and denote the position of \( a \) at time \( c_{i-1} \) as \( s'(a)\). We may assume that \(s'(a)\geq x_{i-1}\). Consider solution \((t',x')\) by defining \(\sigma' = \sigma\) and for all \(j\),
		\[
		(t'_j, x'_j) =
		\begin{cases} 
			\left(c_{i-1} + (s'(a) - x_{i-1})/(1+v), x_i + (s'(a) - x_{i-1})/(1+v)\right), & \text{if } j = i,\\
			(t_j, x_j), & \text{otherwise.}
		\end{cases}
		\]
		Observe that, by construction, every rendezvous is reachable by the clients. Further, we know that \(t'_i \leq t_i\) and \(x'_i \leq x_i\). 
        Since \((t_i, x_i)\) is reachable by \(a\), we know that \(c_i \geq t'_i + (x_i - x'_i)/{v} + \tau(a)\). Thus, in solution \((t',x')\) the server can process \(a\) at position \(x_i\) in time \( t'_i + x_i - x'_i + \tau(a) < t_i + (x_i - x'_i)/{v} + \tau(a) \leq c_i\) using that \(v < 1\) and \(x_i - x'_i \geq 0\) and \(t'_i \leq t_i\). This implies that every rendezvous is reachable by the server. 
		Further, we know that \(t'(a) \leq t(a) \leq d(a)\) for all \(a\in A\) and thus  \( (t',x')\) is feasible.
		It follows that \( (t',x')\) is again an optimal solution and we know that the rendezvous with \(a\) is wait-free. Repeating this procedure for all relevant elements results in a solution that satisfies the condition stated in the lemma.
		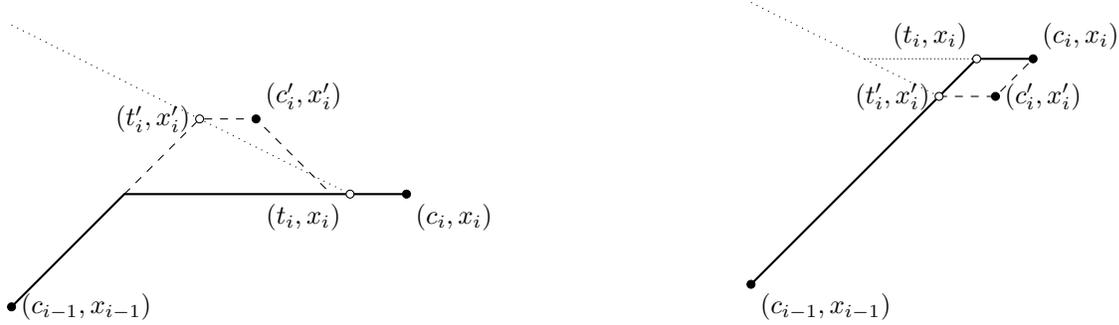
\begin{figure}[t!]
			\centering
			\begin{subfigure}[b]{0.45\textwidth}
				\centering
				\begin{tikzpicture}[scale=1.5]
                \coordinate (startLast) at (0,-1); 
                \coordinate (startClient) at (0,1.5); 
                \coordinate (startOpt) at (5/3,2/3);  
                \coordinate (endOpt) at (5/3 + 0.5,2/3);    
                \coordinate (backOpt) at (7/3+0.5, 0); 
                \coordinate (posWait) at (1,0); 
                \coordinate (startWait) at (3,0); 
                \coordinate (endWait) at (3.5,0);   
                
                \draw[dotted] (startClient) -- (startWait); 
                \draw[thick] (startLast) -- (posWait) -- (endWait); 
                \draw[dashed] (posWait) -- (startOpt) -- (endOpt) -- (backOpt);
                
                \filldraw[fill=black] (startLast) circle (1pt) node[right] {\((c_{i-1}, x_{i-1})\)}; 
                \filldraw[fill=white](startOpt) circle (1pt) node[left] {\((t_i', x_{i}')\)};
                \filldraw[fill=black] (endOpt) circle (1pt) node[above right] {$(c_i', x_i')$};
                \filldraw[fill=white] (startWait) circle (1pt) node[below left] {\((t_i, x_{i})\)};
                \filldraw[fill=black] (endWait) circle (1pt) node[below right] {\((c_i, x_i)\)};
                \end{tikzpicture}
                \caption{Illustration of the argument corresponding to Lemma~\ref{lemma:waitFreeServer}}
				\label{fig:proofWaitFree1}
			\end{subfigure}
			\hfill
			\begin{subfigure}[b]{0.45\textwidth}
				\centering
				\begin{tikzpicture}[scale=1.5]
                \coordinate (startLast) at (0,-1); 
                \coordinate (startClient) at (0,1.5); 
                \coordinate (startOpt) at (5/3,2/3);  
                \coordinate (endOpt) at (5/3 + 0.5,2/3);    
                \coordinate (posWait) at (1,1); 
                \coordinate (startWait) at (2,1); 
                \coordinate (endWait) at (2.5,1);   
                
                \draw[dotted] (startClient) -- (startOpt); 
                \draw[thick] (startLast) -- (startWait) -- (endWait); 
                \draw[dashed] (startOpt) -- (endOpt)  -- (endWait); 
                \draw[densely dotted] (posWait) -- (startWait); 
                
                \filldraw[fill=black] (startLast) circle (1pt) node[below right] {\((c_{i-1}, x_{i-1})\)}; 
                \filldraw[fill=white] (startOpt) circle (1pt) node[left] {\((t_i', x_{i}')\)};
                \filldraw[fill=black] (endOpt) circle (1pt) node[right] {$(c_i', x_i')$};
                \filldraw[fill=white] (startWait) circle (1pt) node[above left] {\((t_i, x_{i})\)};
                \filldraw[fill=black] (endWait) circle (1pt) node[above right] {\((c_i, x_i)\)};
                
                \end{tikzpicture}
                \caption{
                Illustration of the argument corresponding to Lemma~\ref{lemma:waitFreeClient}}
				\label{fig:proofWaitFree2}
			\end{subfigure}
			\caption{Illustration of the argument corresponding to Lemma~\ref{lemma:waitFreeServer} and \ref{lemma:waitFreeClient}. 
            Depicted are time-space diagrams, where the horizontal axis corresponds to time and the vertical axis corresponds to space.
            The solid line represents the trajectory of the server in solution \((t,x)\). 
            The dashed line represents the trajectory of the server in solution \((t',x')\).
            The dotted line represents the trajectory of the client in both solutions.
            The circles indicate the time-space pairs discussed in the argument, where the white circles represent the rendezvous time and the black circles the completion time.}
			\label{fig:proofWaitFreeCombined}
		\end{figure}
	\end{proof}
	
	\begin{lemma}\label{lemma:waitFreeClient}
		In the CLTSP, if the instance is feasible, there exists an optimal solution \((t,x)\) such that for all \(a\in R\), and let \(i\) be such that \(a = \sigma(i)\), we have that \( t(a) = \max\{r(a) + (s(a) - x(a))/v, c_{i-1}\} \), otherwise. For all \(a\in L\), and let \(i\) be such that \(a = \sigma(i)\), we have that \(t(a) = \max\{r(a) + (x(a) - s(a))/v, c_{i-1}\} \). In other words, the clients are colliding. 
	\end{lemma}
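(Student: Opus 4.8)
The plan is to argue by a local exchange, mirroring the proof of Lemma~\ref{lemma:waitFreeServer}. First note that, as observed for the earlier lemmas, a feasible instance admits an optimal solution, and that the reachability constraints already force \(t(a)\ge \max\{r(a)+(s(a)-x(a))/v,\,c_{i-1}\}\) for every \(a\in R\) (and the symmetric bound for \(a\in L\)): the term \(c_{i-1}\) is a lower bound because the server cannot begin the \(i\)-th rendezvous before completing the \((i-1)\)-st, while the client term is exactly the client-reachability constraint \(t(a)\ge r(a)+|x(a)-s(a)|/v\) in the natural orientation \(x(a)\le s(a)\) (which the lemma's formula presupposes and which I assume after a mild normalization). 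Hence the content of the lemma is that equality can be achieved simultaneously for all clients. I would treat the clients in order of increasing index \(i\); it then suffices to take an optimal solution in which rendezvous \(1,\dots,i-1\) already satisfy the claimed identity and to repair a single violating client \(a=\sigma_i\), which WLOG I take to be in \(R\) (the case \(a\in L\) being symmetric). Violation means both \(t(a)>c_{i-1}\) and \(t(a)>r(a)+(s(a)-x(a))/v\), so the client reaches \(x(a)\) strictly before the meeting and waits there.

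The key step is to slide the rendezvous backwards in time along the server's original trajectory. Let \(\ell(t)=s(a)-v\,(t-r(a))\) be the full-speed trajectory of the client, and let \(p(t)\) denote the server's position on its fixed trajectory at time \(t\), so that \(p(c_{i-1})=x_{i-1}\) and \(p(t(a))=x(a)\). Since the client waits at \(x(a)\), at time \(t(a)\) we have \(\ell(t(a))<x(a)=p(t(a))\); if moreover \(\ell(c_{i-1})>x_{i-1}=p(c_{i-1})\), then by continuity the function \(\ell-p\) changes sign on \((c_{i-1},t(a))\) and there is a collision time \(t'(a)\in(c_{i-1},t(a))\) with \(p(t'(a))=\ell(t'(a))\); otherwise the client already reaches the server's position by time \(c_{i-1}\), and I set \(t'(a)=c_{i-1}\). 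In either case I put \(x'(a)=p(t'(a))\) and relocate the \(i\)-th rendezvous to \((t'(a),x'(a))\), leaving every other rendezvous unchanged. By construction \(t'(a)=\max\{r(a)+(s(a)-x'(a))/v,\,c_{i-1}\}\), i.e.\ the repaired client is colliding.

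It remains to verify feasibility and optimality. Server-reachability of the new rendezvous holds because \(x'(a)=p(t'(a))\) lies on the server's original path, so \(|x'(a)-x_{i-1}|\le t'(a)-c_{i-1}\); client-reachability holds by the choice of \(t'(a)\); and \(t'(a)\le t(a)\le d(a)\) preserves the deadline. The only delicate point — and the main obstacle — is the reachability of the next rendezvous \((t_{i+1},x_{i+1})\) after inserting the processing of \(a\) at the earlier, relocated position. Here I use that the server travels at speed at most \(1\) on \([t'(a),t(a)]\), whence \(|x(a)-x'(a)|\le t(a)-t'(a)\), together with the triangle inequality \(|x_{i+1}-x'(a)|\le |x(a)-x'(a)|+|x_{i+1}-x(a)|\). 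Writing \(c_i'=t'(a)+\tau(a)\), these give \(c_i'+|x_{i+1}-x'(a)|\le t'(a)+\tau(a)+(t(a)-t'(a))+|x_{i+1}-x(a)| = c_i+|x_{i+1}-x(a)|\le t_{i+1}\), so the server still reaches the next rendezvous on time while all later constraints are untouched. Since only \(c_i\) changes and it does not increase, neither the makespan nor the sum of completion times grows, so the modified solution is again optimal. Finally, processing the clients in order of increasing index \(i\) finalizes rendezvous \(i\) using the already-fixed value of \(c_{i-1}\) and only decreases \(c_i\), so earlier repairs are never undone; after at most \(n\) steps this yields an optimal solution satisfying the stated identity for all clients in \(R\), and the symmetric construction handles \(L\).
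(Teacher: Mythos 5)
Your proof is correct at the paper's level of rigor, and it takes a meaningfully different route in one important respect. Both arguments are local exchanges with the same geometric core: a client that waits at its rendezvous is instead met at the first point where its full-speed trajectory collides with the server, this can only move the meeting earlier, and iterating over clients yields the claim. The difference is in how that collision point is obtained and how downstream feasibility is certified. The paper first invokes Lemma~\ref{lemma:waitFreeServer} to assume the server is wait-free, so the server's motion between \((c_{i-1},x_{i-1})\) and \((t_i,x_i)\) is a full-speed straight line; the collision point then has the closed form \(c_{i-1}+(s'(a)-x_{i-1})/(1+v)\), and after processing there the server walks on to the old position \(x_i\), arriving exactly at \(c_i\), so the remaining schedule is literally unchanged. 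You instead work with the server's arbitrary original trajectory \(p(t)\), locate the collision by the intermediate value theorem, relocate the rendezvous onto the original path, and secure reachability of the next rendezvous by the triangle inequality \(c_i'+|x_{i+1}-x'(a)|\le c_i+|x_{i+1}-x(a)|\le t_{i+1}\). What the paper's version buys is explicit coordinates for the meeting point, which are reused in the dynamic programs of Section~\ref{sec:DP}. What your version buys is generality and self-containment: Lemma~\ref{lemma:waitFreeServer} is only proved for slow clients with deadlines and general processing times, whereas Lemma~\ref{lemma:waitFreeClient} is stated for the general CLTSP and is later applied (e.g.\ in Theorem~\ref{thm:TSPGenReleaseFast}) to fast clients with release times, where the paper's appeal to wait-freeness is a scope mismatch; your argument avoids that dependency entirely and so actually covers the cases in which the lemma is used. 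Note that both your proof and the paper's share the same glossed-over edge cases (the rendezvous is tacitly assumed on the origin side of \(s(a)\), and neither argument verifies \(t'(a)\ge r(a)\) when the collision is computed before the client's release), so this is not a gap specific to your attempt.
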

	\begin{proof}
		The argument is illustrated in Figure~\ref{fig:proofWaitFree2}.
        Assume the instance is feasible. 
        Consider an optimal solution \((t,x)\) such that there exists a client \(a\) that is not colliding.
        If there does not exist such client, we are done.
        We may assume that \(a \in R\) and 
		we may assume that the server is wait-free.
		Let \(a = \sigma_i\) and \(s'(a)\) denote the position of \(a\) at time \(c_{i-1}\). 
		Note that, by assumption, the client only waits upon arriving at their rendezvous position. 
        Therefore, we know that client a has not waited before time \(c_{i-1}\).
		We may assume that \(s'(a) \geq x_{i-1}\).
		Consider solution \((t',x')\) by defining \(\sigma' = \sigma\) and all \(j\),
		\[
		(t'_j, x'_j) =
		\begin{cases} 
			\left(c_{i-1} + (s'(a) - x_{i-1}) / (1+v), s'(a) - v(s'(a) - x_{i-1}) / (1+v)\right), & \text{if } j = i,\\
			(t_j, x_j), & \text{otherwise.}
		\end{cases}
		\]
		By construction all rendezvous are reachable by the clients.
		We know that both \((t_i,x_i)\) and \((t'_i,x'_i)\) are reached by the server without waiting, thus it holds that \(c_i = t'_i + x_i-x'_i + \tau(a) \). 
        Thus, in solution \((t',x')\) the server can process \(a\) at position \(x_i\) in time \(t'_i + x_i - x'_i + \tau(a)  \leq c_i\). 
        This implies that every rendezvous is reachable by the server. Further, we know that \(t'(a) \leq t(a) \leq d(a)\) for all \(a\in A\) and thus  \( (t',x')\) is feasible.
		It follows that \( (t',x')\) is again an optimal solution and we know that \(a\) is colliding. 
        Repeating this procedure for all relevant elements, results in a solution that satisfies the condition stated in the lemma.
	\end{proof}	
	\section{Algorithms for the Line Traveling Salesman Problem with collaboration
	}\label{sec:algorithmsTSP}
	
	In this section, we give an algorithm that computes an optimal solution for the CLTSP for a selection of variants. In Section~\ref{sec:linAlg}, we propose (log)-linear-time algorithms for some special cases. In Section~\ref{sec:DP} we present dynamic programming algorithms for more general cases.
	
	\subsection{Linear- and log-linear-time algorithms}\label{sec:linAlg}
	In this section, we present three special cases whose structure immediately implies to linear-time or log-linear-time algorithms.
	
		\begin{theorem}\label{thm:TSPNoProcSlow}
		If there are zero-processing times and no time constraints and if the clients are slow, then there exists an algorithm that solves the CLTSP in \(O(n)\) time.
	\end{theorem}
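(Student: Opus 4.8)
The plan is to collapse an arbitrary optimum to a rigid normal form using the structural lemmas already proved, reduce the problem to choosing the order in which the two sides are served together with two turning points, and then solve that reduced problem in closed form; the running time is then dominated by a single $O(n)$ scan.

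Since an instance with zero processing times and no time constraints is a special case of one with general processing times and only deadlines, and feasibility is automatic here (with $d(a)=\infty$ every client can simply wait for the server, so no deadline can be violated), Lemma~\ref{lemma:waitFreeServer} lets me assume the server is \emph{wait-free}, so that $C_{\text{max}}$ equals the total distance the server travels. Lemma~\ref{lemma:waitFreeClient} lets me assume every client is \emph{colliding}, i.e.\ it moves at full speed $v$ straight toward its rendezvous and the meeting happens as early as the two motions allow, and Lemma~\ref{lemma:orderPreservingNoProc} gives an order-preserving optimum in which, within each of $R$ and $L$, clients are met in increasing order of distance from the origin. Writing $\rho=\max_{a\in R}s(a)$ and $\lambda=\max_{a\in L}|s(a)|$ (and $0$ when a side is empty), the first claim I would establish, by an exchange argument on the wait-free trajectory, is that the server makes a single monotone excursion to each side, reaching a right turning point $X_R\ge 0$ and a left turning point $-X_L\le 0$ in one of the two orders, so that $C_{\text{max}}=2X_R+2X_L$: any trajectory that attains those extremes and returns to the origin travels at least this far, and a single pass per side attains it while giving the other side's clients the maximum time to drift inward.

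Next I would pin down the turning points. If the right side is served first, the colliding rightmost client is met at $x=\rho/(1+v)$, which forces $X_R=\rho/(1+v)$; every nearer right client is then met without extra travel on the outbound leg (a client at $p$ either waits at $p$ or is intercepted at $p/(1+v)<X_R$, in both cases at a point whose coordinate equals the arrival time). During the whole right excursion of length $2X_R$ the leftmost client drifts inward by $2vX_R$, so the left excursion need only reach $X_L=\max\{0,(\lambda-2vX_R)/(1+v)\}$, and symmetrically for the left-first order. In the interior regime $C_{\text{max}}=2X_R(1-v)/(1+v)+2\lambda/(1+v)$, which is increasing in $X_R$ because $v<1$, so the minimal $X_R$ is optimal; comparing the two orders shows the difference of their makespans has the sign of $\lambda-\rho$, hence serving the farther side first is optimal, with the boundary case $\lambda<2v\rho/(1+v)$ (where the near side reaches the origin in time and $C_{\text{max}}=2\rho/(1+v)$) treated separately. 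All of this is $O(1)$ arithmetic once $\rho$ and $\lambda$ are known, and the colliding formulas of Lemma~\ref{lemma:waitFreeClient} give each rendezvous $(t(a),x(a))$ in $O(1)$; together with the $O(n)$ computation of $\rho$, $\lambda$, $n_R$, $n_L$ this yields an $O(n)$ algorithm.

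The routine direction is feasibility of the constructed single-pass schedule, which follows by checking the colliding meeting points against the wait-free trajectory. I expect the main obstacle to be the optimality (lower-bound) direction: rigorously justifying the reduction to one excursion per side and ruling out trajectories that cross the origin repeatedly or insert waiting, proving that the first side's turning point should be as small as the extreme client allows (the monotonicity in $X_R$), and verifying that the farther-side-first order dominates, together with the degenerate regime in which a single excursion already suffices.
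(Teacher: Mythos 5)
Your proposal is correct and follows essentially the same route as the paper: the same three structural lemmas (order-preserving, wait-free server, colliding clients) reduce the problem to two candidate pendulum tours with one excursion per side, each turning at the collision point with that side's extreme client, and the same closed-form algebra shows the farther side should be served first, yielding an \(O(n)\) algorithm. The step you flag as the main obstacle---ruling out interleaved excursions---is exactly what the paper disposes of with a short exchange argument (rerouting the server directly to the farthest client of one side only decreases every subsequent meeting time), and your makespan formulas, including the degenerate regime \(\lambda < 2v\rho/(1+v)\) where the near side reaches the origin in time, agree with the paper's computations.
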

	\begin{proof}
		We claim that in an optimal solution the clients in sets \(L\) and \(R\) are processed consecutively. 
		Consider an optimal solution \((t,x)\), where this does not hold.
		By Lemmas~\ref{lemma:orderPreservingNoProc}, \ref{lemma:waitFreeServer} and \ref{lemma:waitFreeClient}, we may assume that \((t,x)\) is order-preserving, the server is wait-free and clients are colliding. 
		Let \(\ell\) represent the client in \(L\) with the greatest starting distance from the origin. Similarly, let \(r\) represent the client in \(R\) with the greatest distance from the origin.
		We may assume that \(r\) is the last client to be processed and that, prior to \(\ell\), the server processed clients from \(R\).
		Consider solution \((t',x')\) that corresponds to the trajectory, such that the server travels to \(\ell\), then to \(r\), then to back to the origin. 
		It is clear, that the trajectory of the server intersects with the trajectory of each client. 
        Since in \((t, x)\) the server processes clients in \(R\) before reaching \(\ell\), whereas in \((t', x')\) it travels directly to \(\ell\), it follows that \(t'(\ell) < t(\ell)\).
        In both solutions, the server travels from \(\ell\) directly to \(r\), implying that \(t'(r) < t(r)\). 
        As the clients are slow, we know that when the server intersects the trajectory of \(r\) earlier, it also arrives at the origin earlier. 
        Consequently, we have that \(C_{\text{max}}(t',x') < C_{\text{max}}(t,x)\), implying a contradiction.
		
		We may assume that \(s(r) > |s(\ell)|\). We claim that in an optimal solution the clients in \(R\) are processed first, followed by the clients in \(L\).
		Consider an optimal solution \((t,x)\), where this does not hold. 
		Again, we assume that \((t,x)\) is order-preserving, the server is wait-free and clients are colliding. 
		By the previous claim, we know that the server first processes clients in \(L\), then in \(R\). 
		Consider solution \((t',x')\), where the server first processes clients in \(R\), then in \(L\).
        The following argument is illustrated in Figure~\ref{fig:proofLinearSlow}.
		We know that the trajectories corresponding to both solutions are fully determined by their rendezvous with \(\ell\) and \(r\). 
		With straightforward algebra, we now determine these rendezvous.
		We have for \((t,x)\)
		\begin{align*}
			(t(\ell), x(\ell)) &= \left(
			-\frac{s(\ell)}{1+v},\ 
			\frac{s(\ell)}{1+v}
			\right), \\
			(t(r), x(r)) 
			&= \left(
			t(\ell) + \frac{s(r) - vt(\ell) - x(\ell)}{1+v}
			,\ x(\ell) + \frac{s(r) - vt(\ell) - x(\ell)}{1+v}
			\right) \\
			&=
			\left(\frac{ (1+v)s(r) - 2s(\ell)}{(1+v)^2},\
			\frac{2s(\ell)v + (1+v)s(r)}{(1+v)^2}\right),
		\end{align*}
		and we have for \((t',x')\)
		\begin{align*}
		(t'(r), x'(r) &= \left(
		\frac{s(r)}{1+v},\ 
		\frac{s(r)}{1+v}
		\right), \\
		(t'(\ell), x'(\ell)) 
		&= \left(
		t'(r) +  \frac{x'(r) - (s(\ell) + vt'(r))  }{1+v},\
		x'(r) - \frac{x'(r) - (s(\ell) + vt'(r)) }{1+v}
		\right)\\
		&=
		\left(
		\frac{2 s(r) -(1+v)s(\ell)}{(1+v)^2},\
		\frac{2vs(r) + (1+v)s(\ell)}{(1+v)^2}
		\right).
	\end{align*}
    The difference in time of the rendezvous with the last client between both solutions is given by
	\begin{align*}
		t'(\ell)-t(r)
		{}&{}=
		\frac{2 s(r) -(1+v)s(\ell)}{(1+v)^2} - \frac{ (1+v)s(r) - 2s(\ell)}{(1+v)^2}	\\
		{}&{}=
		\frac{(1-v) ( s(\ell) + s(r))}{(1+v)^2}.
	\end{align*}
	The difference in distance to the origin of rendezvous with the last client between both solutions is given by
		\begin{align*}
		x(r) - |x'(\ell)|
		{}&{}= 
		\frac{2s(\ell)v + (1+v)s(r)}{(1+v)^2} + \frac{2vs(r) + (1+v)s(\ell)}{(1+v)^2}\\
		{}&{}= 
		\frac{(3 v + 1) (s(\ell) + s(r))}{(v + 1)^2}.
	\end{align*}
	Further, observe that \(C_{max}(t,x) = t(\ell) + |x(\ell)|\) and \(C_{max}(t',x') = t'(r) + |x'(r)|\). We have that 
		\begin{align*}
        C_{max}(t,x) - C_{max}(t',x')
        {}&{}= 
		x(r) - |x'(\ell)| - (t'(\ell)-t(r))\\
		{}&{}= 
		\frac{(3 v + 1) (s(\ell) + s(r))}{(v + 1)^2} - \frac{(1-v) ( s(\ell) + s(r))}{(1+v)^2}\\
		{}&{}= 
		\frac{4 v (s(\ell) + s(r))}{(v + 1)^2} \\
		{}&{}>
		0,
	\end{align*}
	using the assumption that \(s(r) > |s(\ell)|\) and \(0 < v < 1\). This implies that \(C_{max}(t',x') < C_{max}(t,x)\), leading to a contradiction and thereby showing the claim.

    We can now outline the algorithm. First, determine the client farthest from the origin. If this client is in \(L\), process \(L\) first, followed by \(R\); otherwise, process \(R\) first, followed by \(L\). By the previous claims, it follows that this algorithm returns an optimal solution and it is easy to see that it computes in \(O(n)\) time. 
		\begin{figure}[t!]
			\centering
			\begin{tikzpicture}[scale=1.5]
                    \coordinate (startUp) at (0,4); 
                    \coordinate (startLow) at (0,-2.5); 
                    \coordinate (startServer) at (0,0);  
                    \coordinate (UP1) at (3.2,3.2); 
                    \coordinate (UP2) at (7.12, -0.72); 
                    \coordinate (UPE) at (7.84,0); 
                    \coordinate (LP1) at (2, -2); 
                    \coordinate (LP2) at (6.4,2.4); 
                    \coordinate (LPE) at (8.8, 0); 
            
                    \draw[dashed] (startServer) -- (UP1) -- (UP2) -- (UPE); 
                    \draw[thick] (startServer) -- (LP1) -- (LP2) -- (LPE); 
                    \draw[dotted] (startUp) -- (LP2); 
                    \draw[dotted] (startLow) -- (UP2); 
            
                    \draw[thin, lightgray] (startServer) -- (LPE);
                    \node[left, black] at (startServer) {(0,0)};

                    \filldraw[fill=black] (UP1) circle (1pt) node[above, black] {\((t'(r), x'(r))\)};

                    \filldraw[fill=black] (LP2) circle (1pt) node[above, black] {\((t(r), x(r))\)};
                    
                    \filldraw[fill=black] (LP1) circle (1pt) node[below, black] {\((t(\ell), x(\ell))\)};

                    \filldraw[fill=black] (UP2) circle (1pt) node[below, black] {\((t'(\ell), x'(\ell))\)};
            

                \end{tikzpicture}
			\caption{Illustration of the argument corresponding to Theorem~\ref{thm:TSPNoProcSlow}. Depicted is a time-space diagram where the horizontal axis corresponds to time and the vertical axis corresponds to space. The solid line represents the trajectory of the server in solution \((t,x)\), the dashed line represents the trajectory of the server in solution \((t',x')\), and the dotted line represents the trajectory of client \(a\). The circles indicate the time-space pairs that are discussed in the argument. }
			\label{fig:proofLinearSlow}
		\end{figure}
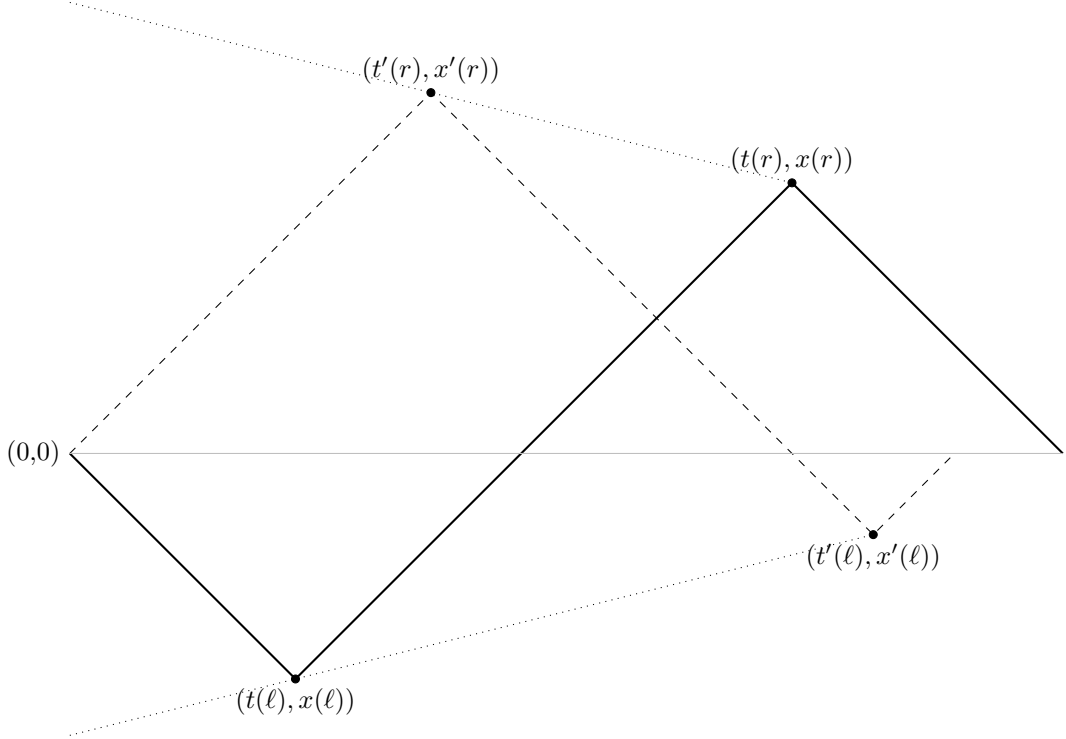
	\end{proof}
	
	\begin{theorem}\label{thm:TSPGenReleaseFast}
		If there are general processing times and only release times and if the clients are fast, then there exists an algorithm that solves the CLTSP in \(O(n)\) time.
	\end{theorem}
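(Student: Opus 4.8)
The plan is to prove that, for fast clients subject only to release times, there is an optimal solution in which the server stays at the origin for the entire horizon and each client travels to the origin to be served there. The intuition is that, with no deadlines and every client at least as fast as the server, it is never worse to let the clients do all the travelling, and since the server is forced to return to the origin anyway, any excursion it makes is wasted motion. For each client \(a\) let \(\rho(a) := r(a) + |s(a)|/v\) denote the earliest time at which \(a\) can reach the origin. I would then establish that the optimal makespan equals the minimum makespan of the single-machine scheduling problem with release dates \(\rho(a)\) and processing times \(\tau(a)\), and that this value is attained by the stationary-server schedule.

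The technical core is a matching lower bound: for every \(S \subseteq A\), writing \(\rho^\ast := \min_{a \in S} \rho(a)\), every feasible solution satisfies \(C_{\text{max}} \ge \rho^\ast + \sum_{a \in S} \tau(a)\). Fix a feasible solution, let \(c(a) := t(a) + \tau(a)\) be the completion time of \(a\), and list the clients of \(S\) as \(a_1, \dots, a_k\) in the order the server processes them. I would combine three facts. First, the forced return gives \(C_{\text{max}} \ge c(a_k) + |x(a_k)|\). Second, the server-reachability inequalities \(t(a_j) \ge c(a_{j-1}) + |x(a_j) - x(a_{j-1})|\) hold even when other clients are interleaved between \(a_{j-1}\) and \(a_j\), because the intervening travel and processing only add time; telescoping them yields \(t(a_k) \ge t(a_1) + \sum_{j=2}^{k} \tau(a_{j-1}) + \sum_{j=2}^{k} |x(a_j) - x(a_{j-1})|\). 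Third, client-reachability \(t(a_1) \ge r(a_1) + |x(a_1) - s(a_1)|/v\) together with the elementary inequality \(|x - s|/v + |x| \ge |s|/v\) gives \(t(a_1) + |x(a_1)| \ge \rho(a_1)\); this is the single point where \(v \ge 1\) is essential, since the inequality reduces to \((v-1)|x| \ge 0\) after applying the reverse triangle inequality. Assembling these, and using the telescoping triangle inequality \(\sum_{j=2}^{k} |x(a_j) - x(a_{j-1})| + |x(a_k)| \ge |x(a_1)|\), everything collapses to \(C_{\text{max}} \ge \rho(a_1) + \sum_{a \in S} \tau(a) \ge \rho^\ast + \sum_{a \in S} \tau(a)\). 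I expect this bookkeeping, and pinpointing precisely where fastness enters, to be the main obstacle.

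For the upper bound I would exhibit the stationary-server schedule explicitly: each client \(a\) leaves \(s(a)\) at time \(r(a)\) and arrives at the origin exactly at \(\rho(a)\), while the server, parked at the origin, serves the waiting clients in non-decreasing order of \(\rho(a)\) and never idles while a client is present. Because the clients are fast, every origin rendezvous is reachable, so the schedule is feasible, and its makespan equals \(\max_{S}\bigl(\min_{a \in S} \rho(a) + \sum_{a \in S} \tau(a)\bigr)\), which is exactly the single-machine optimum and therefore matches the lower bound. Hence the stationary schedule is optimal and the CLTSP collapses to evaluating a non-delay single-machine schedule.

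The algorithm then reads off the answer: compute every \(\rho(a)\) in \(O(n)\) and, with the clients arranged in \(\rho\)-order, run the recurrence \(t(a_j) = \max\{\rho(a_j), c(a_{j-1})\}\) and \(c(a_j) = t(a_j) + \tau(a_j)\), returning the final completion time (the return to the origin is free). Arranging the clients by \(\rho(a)\) is the only ingredient beyond a single linear sweep, and carrying it out within the paper's input conventions yields the claimed \(O(n)\) bound. I do not anticipate difficulty here; the entire weight of the proof sits in the structural lower bound of the second paragraph.
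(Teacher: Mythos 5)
Your structural argument is correct, but it takes a genuinely different route from the paper. The paper proves the same key fact---that an optimal solution exists in which every rendezvous happens at the origin---by a local exchange argument: it takes the \emph{last} rendezvous with $x_j \neq 0$ in an optimal solution, reroutes the server back to the origin after rendezvous $j-1$, and uses server-reachability plus $v \geq 1$ to show the relocated meeting completes no later than the time $T = t_j + \tau(a) + |x_j|$ at which the original server would have been back at the origin anyway; iterating this kills all non-origin rendezvous. You instead prove a global matching lower bound: for every $S \subseteq A$, telescoping the server-reachability inequalities along the clients of $S$, using client-reachability of the first one together with $|x-s|/v + |x| \geq |s|/v$ (valid exactly when $v \geq 1$), and the forced return, gives $C_{\max} \geq \min_{a \in S}\rho(a) + \sum_{a\in S}\tau(a)$. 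Since the stationary-server schedule attains $\max_S\bigl(\min_{a\in S}\rho(a) + \sum_{a\in S}\tau(a)\bigr)$ by the standard block decomposition of $1|r_j|C_{\max}$, optimality follows. Your inequalities check out, and your route buys something the paper's does not: an explicit closed-form characterization of the optimum as a single-machine makespan and a proof that needs no appeal to the paper's Lemma on colliding clients (Lemma~\ref{lemma:waitFreeClient}), which the paper's exchange argument invokes. The paper's route is shorter and reuses its structural machinery.

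One point deserves a flag: your algorithm, as stated, sorts the clients by $\rho(a)$ before running the recurrence $t(a_j) = \max\{\rho(a_j), c(a_{j-1})\}$, which is $O(n\log n)$ in the comparison model, and your appeal to ``the paper's input conventions'' does not repair this, since the paper nowhere assumes pre-sorted input (indeed, its own Theorem~\ref{thm:TSPnoProcTimeWindow} explicitly charges $O(n\log n)$ for sorting). To honestly claim $O(n)$ you would need to compute the value $\max_j\bigl(\rho_j + \sum_{k:\rho_k \geq \rho_j}\tau_k\bigr)$ without sorting, which is not a single linear sweep. In fairness, the paper itself dismisses this final step with ``it is then trivial to compute an optimal solution in $O(n)$ time,'' so the gap is shared; but in your write-up it is visible rather than hidden, because your stated procedure explicitly requires the sorted order.
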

		\begin{proof}
		We claim that there exists a solution in which all rendezvous positions are at the origin.
		Consider an optimal solution \((t,x)\), where this is not the case and let \((t_j, x_j)\) denote the last rendezvous with \(x_j \neq 0\). 
		By Lemma~\ref{lemma:waitFreeClient}, we may assume that clients are colliding.
		Consider solution \((t',x')\) where the server moves back to the origin after rendezvous \(j-1\). 
        Let \(a = \sigma(j)\). 
        The time of the rendezvous with \(a\) is the maximum of the time when \(a\) arrives the origin and the time when the server arrives at the origin after the rendezvous with \(j-1\). 
        Thus, we define \((t',x')\) with \(\sigma' = \sigma\) and 
		\[
		(t'_i, x'_i) =
		\begin{cases}
			\left(\max\{c_{j-1} + |x_{j-1}|, t_j + |x_j|/v\}, 0\right), & \text{if } i = j, \\
			(t_i, x_i), & \text{otherwise.}
		\end{cases}
		\]
		In \((t,x)\), the time at which the client arrives at the origin, after processing \(a\) is given by \(T \coloneqq t_j + \tau(a) + |x_j|\).  In \((t',x')\), we know that the server arrives at the origin at time 
		\begin{align*}
			c_{j-1} + |x_j| 
			{}{}\leq 
		      c_{j-1} + |x_j| + |x_{j} - x_{j-1}|
			{}{}\leq
			t_j   + |x_j|
			{}{}= 
			T - \tau(a),
		\end{align*}
		using that \(c_{j-1} + |x_{j} - x_{j-1}| \leq t_j\). Further, we know that client \(a\) is at position \(x_j\) at time \(t_j\), thus \(a\) arrives at the origin at latest at time \(t_j + |x_j|/v \leq  T - \tau(a)\), using that \(v \geq 1\). Putting everything together, we know that
		\begin{align*}
			c'_j 
			{}{}=
			t'_j + \tau(a) 
			{}{}=
			\max\{t_j + \frac{|x_j|}{v}, c_{j-1} + |x_j|\} + \tau(a) 
			{}{}\leq 
			T.
		\end{align*}
		Consequently, we know that every rendezvous is reachable by the server and the clients and it directly follows that \(C_{max}(t',x') \leq C_{max}(t,x)\). Thus, \((t',x')\) is again an optimal solution and thereby proving the claim.
		
		It is then trivial to compute an optimal solution in \(O(n)\) time.
	\end{proof}
	
	\begin{theorem}\label{thm:TSPnoProcTimeWindow}
		If there are zero-processing times and time windows and if the clients are fast, then there exists an algorithm that solves the CLTSP in \(O(n\log n)\) time.
	\end{theorem}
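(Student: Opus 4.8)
The plan is to recast the problem geometrically and then exploit that fast clients ($v\ge 1$) can always collide with the server. By Lemma~\ref{lemma:waitFreeClient} we may assume every client collides, so a client $a$ is served precisely when the server's trajectory enters the client's \emph{reachability cone} $P_a = \{(t,x) : r(a)\le t\le d(a),\ |x-s(a)| \le v(t-r(a))\}$ in the time--space plane. Since the server moves at unit speed and must return to the origin, its trajectory is a Lipschitz-$1$ curve from $(0,0)$ to $(T,0)$, and minimizing the makespan is exactly minimizing $T$ subject to the curve meeting every cone $P_a$. Because $v\ge 1$, each cone is at least as steep as any server trajectory; hence a rightward-moving server can only enter the cone of a client $a\in R$ through its \emph{left} boundary $x = s(a)-v(t-r(a))$, and the quantity (left boundary minus server position) is non-increasing in $t$. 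The first key step is to turn this monotonicity into a per-client scalar condition: client $a\in R$ is met iff the server is at position $\ge s(a)-v(t-r(a))$ for some $t\in[r(a),d(a)]$, which collapses to a single ``reach far enough to the right by time $d(a)$'' requirement (and symmetrically for $L$).

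Second, I would establish the shape of an optimal trajectory. Using an exchange argument in the spirit of Lemmas~\ref{lemma:orderPreservingNoProc}--\ref{lemma:waitFreeClient}, I would show there is an optimal solution in which the server performs a single monotone excursion to the right, reaching some maximal point $p_R\ge 0$ and returning, and a single monotone excursion to the left, reaching $-p_L$ and returning, in one of the two possible orders, with all waiting concentrated at the origin. The intuition is that backtracking wastes travel while fast, colliding clients can always be intercepted on a single outward pass, so no interleaving is ever beneficial for the makespan. A client whose cone reaches the origin before its deadline, i.e.\ with $\rho_a := r(a)+|s(a)|/v \le d(a)$, can simply be met at the origin; the remaining clients force the excursion to extend outward, and $p_R$ (resp.\ $p_L$) is exactly the furthest crossing position required to intercept all such forcing clients when the server travels outward at full speed.

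Third comes the algorithm. I would sort the clients of each side by distance from the origin, which is the only super-linear step and yields the $O(n\log n)$ bound. A single linear sweep from the farthest client inward then computes $p_R$ and $p_L$ together with, for every client, the earliest in-window time at which a full-speed excursion intercepts it; folding in the origin-arrival times $\rho_a$ handles the release dates. Finally I would evaluate both excursion orders, writing the makespan of a feasible order as the maximum of the total travel $2p_R+2p_L$, the latest forced origin meeting $\max_a \rho_a$, and the deadline-driven return times, checking deadline feasibility of each order in linear time, and returning the smaller feasible makespan (or declaring infeasibility).

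I expect the main obstacle to be the second step: rigorously proving that a single monotone excursion per side---rather than several interleaved back-and-forth trips---is optimal once general time windows are present, and that coupling the two excursions in the better of the two orders never loses feasibility (intuitively, an instance that would genuinely require the server at two far-apart positions at overlapping deadlines is already infeasible). Getting the bookkeeping right for which clients are intercepted en route versus at the origin, and proving the resulting closed-form makespan is both achievable and a valid lower bound, is where the care lies; the geometric reformulation and the sorting-based computation of $p_R$ and $p_L$ are comparatively routine.
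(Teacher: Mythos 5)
Your geometric setup (reachability cones, fast clients, sorting by deadlines, a linear sweep) is the same starting point as the paper's proof, but two of your key claims are false once general time windows are present, and both failures are caused by release dates. First, your per-client condition is one-sided when it must be two-sided. Because \(v \ge 1\), both ``server has crossed the left boundary'' and ``server is left of the right boundary'' are absorbing properties, so the correct collapse is: client \(a\) can be served iff the server's position at time \(d(a)\) lies in the \emph{interval} \([y_{min}(a), y_{max}(a)] = [\,s(a)-v(d(a)-r(a)),\ s(a)+v(d(a)-r(a))\,]\). Your version keeps only the lower bound, and the ``iff'' fails because the server can overshoot a cone: take \(v=1\) and a client with \(s(a)=1\), \(r(a)=10\), \(d(a)=11\). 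A server that moves right at full speed is at position \(t\ge 10\) throughout \([10,11]\), so it satisfies ``position \(\ge s(a)-v(t-r(a))\)'', yet the cone at time \(t\) is \([11-t,\ t-9]\), which the server never touches. With release dates, clients near the origin that are released late impose genuine \emph{upper} bounds on how far out the server may be at their deadlines, and your scalar ``reach far enough right by \(d(a)\)'' quantity cannot express this.

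Second, the structural claim you flag as the main obstacle is not just hard to prove, it is wrong: an optimal (indeed, any feasible) trajectory may need several excursions on the same side. With \(v=1\), take clients \((s,r,d) = (100,50,100)\), \((-100,150,200)\), \((100,250,300)\). Their cones force the server to be at position \(\ge 50\) at time \(100\), at position \(\le -50\) at time \(200\), and again at position \(\ge 50\) at time \(300\); this instance is feasible (go to \(50\), then to \(-50\), then back to \(50\), then home) but requires two separate rightward excursions, so no ``one monotone excursion per side, in one of two orders'' solution exists, and neither does a closed-form makespan of the type \(\max\{2p_R+2p_L,\ \max_a \rho_a,\ \dots\}\). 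The paper avoids both problems by never positing a global tour shape: it sorts the deadlines and propagates the two-sided \emph{reachable} intervals \(x_{min}(a_i) = \max\{y_{min}(a_i),\ x_{min}(a_{i-1}) - (d(a_i)-d(a_{i-1}))\}\) and \(x_{max}(a_i) = \min\{y_{max}(a_i),\ x_{max}(a_{i-1}) + (d(a_i)-d(a_{i-1}))\}\), declaring infeasibility when an interval empties; this Lipschitz interval propagation accommodates arbitrarily many direction changes, and the makespan is then read off as \(\max\{C, C'\}\), where \(C\) comes from the last propagated interval excluding the origin and \(C'\) is the latest client arrival time at the origin. Your sweep would have to be repaired to carry these two-sided intervals (rather than the scalars \(p_R, p_L\)) before the rest of your outline, including the \(O(n\log n)\) bound, can go through.
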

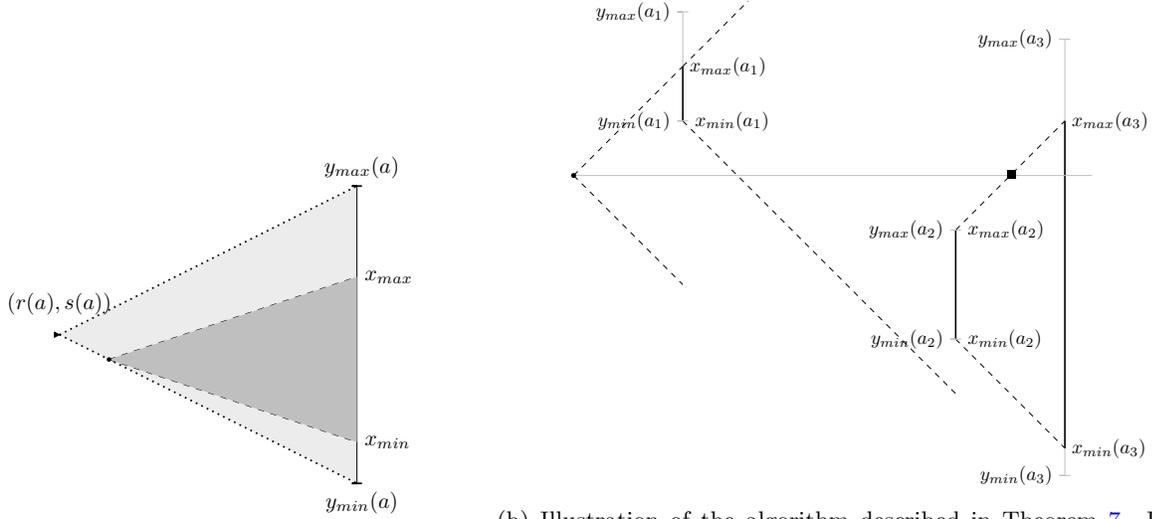
\begin{figure}[t!]
    \centering
    \begin{subfigure}[b]{0.35\textwidth}
        \centering
        \resizebox{1\textwidth}{!}{%
            \begin{tikzpicture}[scale=0.8]
    \coordinate (startClient) at (0,0); 
    \coordinate (ubClient) at (6,3); 
    \coordinate (lbClient) at (6,-3); 
    \coordinate (Rendezvous) at (1,-0.5); 
    \coordinate (ubServer) at (6,7/6); 
    \coordinate (lbServer) at (6,-13/6); 

    \fill[gray!15] (startClient) -- (ubClient) -- (lbClient) -- cycle;

    \draw[thick, dotted] (startClient) -- (ubClient); 
    \draw[thick, dotted] (startClient) -- (lbClient); 
    \draw (ubClient) -- (lbClient);
    \draw[dashed] (Rendezvous) -- (ubServer); 
    \draw[dashed] (Rendezvous) -- (lbServer); 
    \fill[gray!50] (Rendezvous) -- (ubServer) -- (lbServer) -- cycle;

    \node[right, black] at (ubServer) {\(x_{max}\)};
    \node[right, black] at (lbServer) {\(x_{min}\)};

    \filldraw[shift={(startClient)}, rotate=180] (0,0) -- (0.1cm,0.05cm) -- (0.1cm,-0.05cm) -- cycle 
    node[above=2mm] {\((r(a),s(a))\)};
    \draw[thick] (ubClient) -- ++(-0.1,0) -- ++(0.2,0) 
    node[above] {\(y_{max}(a)\)};
    \draw[thick] (lbClient) -- ++(-0.1,0) -- ++(0.2,0) 
    node[below] {\(y_{min}(a)\)};
    \filldraw (Rendezvous) circle (1pt);
\end{tikzpicture}
        }        
        \caption{
        Illustration of the argument in Theorem~\ref{thm:TSPnoProcTimeWindow}. The light-gray triangle represents the space-time points the client can reach, while the dark-grey triangle represents the space-time points the server can reach.}
        \label{fig:proofLinearFastBounds}
    \end{subfigure}
    \hfill
    \begin{subfigure}[b]{0.6\textwidth}
        \centering
        \resizebox{0.8\textwidth}{!}{%
            \begin{tikzpicture}[scale=1]
    \coordinate (startServer) at (0,0); 
    \coordinate (LB1) at (2,1); 
    \coordinate (RB1) at (2,3); 
    \coordinate (LB2) at (7,-3); 
    \coordinate (RB2) at (7,-1); 
    \coordinate (LB3) at (9,-5.5); 
    \coordinate (RB3) at (9,2.5); 
    
    \coordinate (SLB1) at (2,-2); 
    \coordinate (SRB1) at (2,2); 
    \coordinate (SLB2) at (7,-4); 
    \coordinate (SRB2) at (7,7); 
    \coordinate (SLB3) at (9,-5); 
    \coordinate (SRB3) at (9,1); 

    \coordinate (Cmax) at (8,0); 

    \draw[lightgray] (LB1) -- (RB1);
    \draw[lightgray] (LB2) -- (RB2);
    \draw[lightgray] (LB3) -- (RB3);
    \draw[thin, dashed] (startServer) -- (SLB1);
    \draw[thin, dashed] (startServer) -- (SRB1);
    \draw[thick] (LB1) -- (SRB1) node[right] {\(x_{max}(a_1)\)};
    \draw[thin, dashed] (SRB1) -- (3.2,3.2);
    \draw[thin, dashed] (LB1) -- (SLB2);
    \draw[thick] (RB2) -- (LB2);
    \draw[thin, dashed] (RB2) -- (SRB3);
    \draw[thin, dashed] (LB2) -- (SLB3);
    \draw[thick] (SRB3) node[right] {\(x_{max}(a_3)\)} -- (SLB3) node[right] {\(x_{min}(a_3)\)};
    \draw[thin, lightgray] (startServer) -- (9.5,0);

    \draw[lightgray] (RB1) -- ++(-0.1,0) -- ++(0.2,0) 
    node[left, black, xshift=-2mm] {\(y_{max}(a_1)\)};
    \draw[lightgray] (LB1) -- ++(-0.1,0) -- ++(0.2,0) 
    node[left, black, xshift=-2mm, yshift=0mm] {\(y_{min}(a_1)\)} node[right, black] {\(x_{min}(a_1)\)};
    \draw[lightgray] (RB2) -- ++(-0.1,0) -- ++(0.2,0) 
    node[left, black, xshift=-2mm] {\(y_{max}(a_2)\)} node[right, black] {\(x_{max}(a_2)\)};
    \draw[lightgray] (LB2) -- ++(-0.1,0) -- ++(0.2,0) 
    node[left, black, xshift=-2mm] {\(y_{min}(a_2)\)} node[right, black] {\(x_{min}(a_2)\)};
    \draw[lightgray] (RB3) -- ++(-0.1,0) -- ++(0.2,0) 
    node[left, black, xshift=-2mm] {\(y_{max}(a_3)\)};
    \draw[lightgray] (LB3) -- ++(-0.1,0) -- ++(0.2,0) 
    node[left, black, xshift=-2mm] {\(y_{min}(a_3)\)};
    \filldraw (startServer) circle (1pt);
    \fill (Cmax) + (-0.05, -0.05) rectangle ++(0.1, 0.1);
\end{tikzpicture}
        }
        
        \caption{Illustration of the algorithm described in Theorem~\ref{thm:TSPnoProcTimeWindow}.
        Depicted is an example with clients \(a_1, a_2\) and \(a_3\).
        The dotted lines are the bounds of the space-time points that the server can reach.
        The grey line segment corresponds to the space-time points that the respective client can reach by their deadlines.
        The black line segment corresponds to the space-time points that the respective client as well as the server can reach by their deadlines.}
        \label{fig:proofLinearFastAlgorithm}
    \end{subfigure}
    \caption{Illustration of the arguments corresponding to Theorem~\ref{thm:TSPnoProcTimeWindow}. Depicted are time-space diagram with time of the horizontal axis and time on the vertical axis.}
\end{figure}

	\begin{proof}
		For client \(a\), the positions reachable at time \(d(a)\) are defined by all \(x\) with \(y_{min}(a) \leq x \leq y_{max}(a)\), where \(y_{min}(a) \coloneqq s(a) - v(d(a) - r(a))\) and \(y_{max}(a) \coloneqq s(a) + v(d(a) - r(a))\).
		We claim that in a feasible solution, the server is at a position \(x\) at time \(d(a)\) with \(y_{min}(a) \leq x \leq  y_{max}(a)\). 
        This argument is illustrated in Figure~\ref{fig:proofLinearFastBounds}.
		To see this, consider a feasible solution \((t,x)\). 
        We know that the rendezvous with \(a\) satisfies \(r(a) \leq t(a) \leq d(a)\) and \(s(a) - v(t(a) - r(a)) \leq x(a) \leq s(a) + v(t(a) - r(a))\).
		All reachable positions by the server at time \(d(a)\) are given by \(x_{min} \coloneqq x(a) - (d(a) - t(a))\) and \(x_{max} \coloneqq x(a) + d(a) - t(a) \).
		Observe that
        \begin{align*}
		x_{min} - y_{min}(a) 
		{}&{}= 
		x(a) - (d(a) - t(a)) - (s(a) - v(d(a) - r(a)))\\
		{}&{}\geq
    	s(a) - v(t(a) - r(a)) - (d(a) - t(a)) - (s(a) - v(d(a) - r(a)))\\
		{}&{}\geq
		(v-1)(d(a)-t(a))\\
        {}&{}\geq
        0,
		\end{align*}
        using that  \(x(a) \geq s(a) - v(t(a) - r(a))\), \(v \geq 1\) and \(d(a) \geq t(a)\), implying that \(x_{min} \geq y_{min}(a)\). And we have
		\begin{align*}
			y_{max}(a) - x_{max} 
			{}&{}= 
			s(a) + v(d(a) - r(a)) - (x(a) + d(a) - t(a))\\
			{}&{}\geq
			s(a) + v(d(a) - r(a)) - (s(a) + v(t(a) - r(a)) + d(a) - t(a))\\
			{}&{}\geq
			(v-1)(d(a)-t(a)) \\
            {}&{}\geq\
            0,
		\end{align*}
		using that \(x(a) \leq s(a) + v(t(a) - r(a))\), \(v \geq 1\) and \(d(a) \geq t(a)\), implying that \(y_{max}(a) \geq x_{max}\) and the claim follows.
		
		Sort and index the clients in non-decreasing order of their deadlines, resulting in the sequence \(a_1, a_2,\ldots a_n\). 
		For a given \(a\), we define the reachable positions for \(a\) as the pair \((x_{min}(a), x_{max}(a))\), where there exists a feasible trajectory of the server that intersects a position \(x\) satisfying \(x_{min}(a) \leq x \leq x_{max}(a)\) at time \(d(a)\).
		For \(a_1\), we know that \(x_{min}(a_1) = \max\{y_{min}(a_1), -d(a_1) \}\) and \(x_{max}(a_1) = \min\{y_{max}(a_1), d(a_1) \}\). For \(a_i\) with \(i \geq 2\), we know that 
		 \(x_{min}(a_i) = \max\{y_{min}(a_i), x_{min}(a_{i-1}) - (d(a_i) - d(a_{i-1})) \}\) and \(x_{max}(a_i) = \min\{y_{max}(a_i), x_{max}(a_{i-1}) + (d(a_i) - d(a_{i-1})) \}\). 

		 We now describe the algorithm, which is illustrated in Figure~\ref{fig:proofLinearFastAlgorithm}. For \(i = 1, 2, \ldots, n\), compute pairs \((x_{min}(a_i) , x_{max}(a_i))\).  If there exists an \(i\) with \(x_{min}(a_i) > x_{max}(a_i)\),  return that the instance is infeasible.
         Otherwise, find the last client \(i\), such that the origin is not included in the interval \([x_{min}(a_i), x_{max}(a_i)]\), and let \(C \coloneqq d(a_i)  + \min\{ | x_{min}(a_i) |, | x_{max}(a_i) | \}\) if such \(i\) exists and  \(C \coloneqq 0\), otherwise. 
         Further, find the client \(j\) that arrives at the origin the latest and denote its arrival time at the origin with \(C'\). 
         Return \(\max\{C,C'\}\). 

         Based on the previous discussion, we know that the positions \(x_{min}(a_n) \leq x \leq x_{max}(a_n)\) are those positions that can be reached in any feasible solutions. 
         Thus, \(C\) represents the minimum time, that the server requires to serve clients \(a_1, \ldots, a_i\) within their time window and then return to the origin.
         Clients \(a_{i+1}, \ldots, a_n\) are then served at the origin in their time window.
         Observe that the solution is feasible.
         Correctness follows from a similar line of arguments as in Theorem~\ref{thm:TSPGenReleaseFast}.
         If \(a_n\) is not served at the origin, then, because clients are fast, the server reaches the origin earliest at the time that \(a_n\) arrives at the origin. 
         Thus, there must exist an optimal solution, where \(a_n\) is served at the origin.
         Repeating this procedure, shows that there must exist an optimal solution, where clients \(a_{i+1}, \ldots, a_n\) are served at the origin, showing the correctness of the algorithm.
         
		 Finally, observe that sorting the clients takes \(O(n\log n)\) time and computing the reachable positions takes \(O(n)\) time, thereby concluding the proof.
	\end{proof}

  \begin{figure}[t!]
			
		\end{figure}
	
	\subsection{Dynamic programming algorithms}\label{sec:DP}
	Before we discuss the algorithm, we provide a technical theorem establishing a dominance criterion that is used in the dynamic programming algorithms.

    \begin{lemma}\label{lemma:dominance}
        Consider the CLTSP, assuming general processing times, deadlines, and slow clients. 
        Let \(A' \subseteq A\) be a subset of clients, and let \(a \in A'\) be the client processed last among \(A'\). Define \( \mathcal{S} \) as the set of solutions that process clients in \(A'\) first and \(a\) last among them.
    
        If there exists an optimal solution \((t, x) \in \mathcal{S}\), then there also exists an optimal solution \((t^*, x^*) \in \mathcal{S}\) such that for all \((t', x') \in \mathcal{S}\), it holds that \(t^*(a) \leq t'(a)\).
	\end{lemma}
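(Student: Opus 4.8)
The plan is to obtain the required solution by a \emph{splicing} argument: take the given optimal solution $(t,x)\in\mathcal S$, independently find a feasible solution in $\mathcal S$ whose rendezvous with $a$ happens as early as possible, and then glue the early ``prefix'' of the latter (up to and including $a$) onto the ``suffix'' of the former (which serves $A\setminus A'$ and returns to the origin). Throughout I would first normalize every solution under consideration to be wait-free for the server and colliding for the clients, which is permitted here by Lemmas~\ref{lemma:waitFreeServer} and~\ref{lemma:waitFreeClient} (the hypotheses—general processing times, deadlines, slow clients—match), so that each trajectory is determined by its sequence and rendezvous positions.

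First I would show the relevant minimum is attained. Define $\theta^{*}:=\inf\{\,t'(a): (t',x')\in\mathcal S \text{ feasible}\,\}$. For each \emph{fixed} processing order of the clients (there are finitely many, and each is compatible with $\mathcal S$), the set of feasible vectors $(\mathbf t,\mathbf x)$ is a closed polyhedron, since every constraint—server reachability $t_i\ge c_{i-1}+|x_i-x_{i-1}|$, client reachability $t(a)\ge |x(a)-s(a)|/v$ (recall $r(a)=0$), the deadlines $t(a)\le d(a)$, and $x_0=x_{n+1}=0$—is linear. As $t(a)$ is a linear functional bounded below by $0$, its minimum over each nonempty such polyhedron is attained; taking the smallest over all orders yields a feasible $(\bar t,\bar x)\in\mathcal S$ with $\bar t(a)=\theta^{*}$.

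Next comes the splice itself. Write the server's state just after finishing $a$ as $(c_1,p_1):=(t(a)+\tau(a),x(a))$ in the optimal solution and $(c^{*},p_0):=(\theta^{*}+\tau(a),\bar x(a))$ in the minimizer, so $c^{*}\le c_1$. I would build $(\hat t,\hat x)$ by copying $(\bar t,\bar x)$ up to and including the rendezvous with $a$, then letting the server travel directly toward $p_1$, and from the state $(c_1,p_1)$ onward copying the suffix of $(t,x)$ verbatim. Feasibility of the prefix is inherited from $(\bar t,\bar x)$; the suffix is identical to the optimal solution from the \emph{common} state $(c_1,p_1)$, so every client of $A\setminus A'$ is met at the same time and position as before—hence reachable by server and client and within its deadline—and the server returns at time $C_{\max}(t,x)$. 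Thus $(\hat t,\hat x)$ would be optimal, lie in $\mathcal S$, and satisfy $\hat t(a)=\theta^{*}\le t'(a)$ for every $(t',x')\in\mathcal S$, which is exactly the assertion. The \emph{only} thing left to justify is that the server can genuinely reach $p_1$ by time $c_1$, i.e.\ that $|p_1-p_0|\le c_1-c^{*}=t(a)-\theta^{*}$.

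This last inequality is the crux and the step I expect to be the main obstacle, so I would prove it as a separate geometric claim: the minimizer can be chosen so that $(c^{*},p_0)$ lies in the backward light cone of $(c_1,p_1)$. Fix the order used by the optimal solution and consider the convex region $R$ of attainable rendezvous pairs $(t(a),x(a))$ for that order. Because the server moves at unit speed and waiting leaves its position unchanged, the server's position is $1$-Lipschitz in time; consequently the rightmost (resp.\ leftmost) position it can occupy can increase (resp.\ decrease) by at most $|\theta'-\theta|$ between two times, so the outer boundary of $R$ \emph{expands outward at rate at most $1$}, while the client-reachability constraints, the clients being slow, contribute only boundaries of slope $\pm v<1$, and the deadlines contribute vertical upper bounds. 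Hence $R$ is contained in the forward light cone of any of its minimal-time points, and choosing $p_0$ as the minimal-time position on the side of $p_1$ gives $|p_1-p_0|\le t(a)-\theta^{*}$, closing the gap. It is precisely here that unit server speed (and, for the enabling structural lemmas, slowness $v<1$) is used, and establishing this monotone outward-growth property rigorously—especially that projecting the full feasible polyhedron onto the $(t(a),x(a))$-coordinates preserves it—is the delicate part of the argument.
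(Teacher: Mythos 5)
Your overall architecture---normalize via Lemmas~\ref{lemma:waitFreeServer} and~\ref{lemma:waitFreeClient}, show the infimum of \(t'(a)\) over \(\mathcal S\) is attained, then splice the minimizer's prefix onto the optimal solution's suffix---matches the paper's exchange argument, and your attainment step (polyhedrality for each fixed order) is a rigorous detail the paper glosses over. The divergence is in the crux inequality \(|p_1-p_0|\le t(a)-\theta^{*}\). The paper gets this in one line from the colliding-clients property: after normalization, both the optimal solution's and the competitor's rendezvous with \(a\) lie on the client's own trajectory, a line of slope \(\pm v\) in the time-space plane, so the spatial gap between the two rendezvous points is \(v\) times the temporal gap, and the unit-speed server can catch up precisely because \(v<1\). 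Crucially, that argument is indifferent to how the two solutions order \(A'\setminus\{a\}\). You invoke the colliding normalization at the start but never use it; instead you substitute a light-cone property of the attainable rendezvous region.

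That substitution is where the gap lies. Your light-cone claim is established for a \emph{fixed} processing order (you need a fixed order to make the feasible set a polyhedron and \(R\) its convex projection), namely the order of the optimal solution, whereas your \(\theta^{*}\) and \(p_0\) come from a minimizer over \emph{all} of \(\mathcal S\), whose internal order of \(A'\setminus\{a\}\) may differ. The union \(\bigcup_{\pi}R_{\pi}\) over orders is not convex, and the containment genuinely fails across orders: take \(s(a)=0\), \(v=1/2\), zero processing times, and \(A'\setminus\{a\}\) consisting of two clients at \(+4\) and \(-4\); the two serving orders lead to prefix completions \((56/9,-8/9)\) and \((56/9,+8/9)\), hence to minimal-time rendezvous with \(a\) at the same time but at positions \(16/9\) apart, so neither lies in the forward light cone of the other and a server that executed one prefix cannot reach the other prefix's rendezvous state. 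Consequently your splice only yields \(t^{*}(a)\le t'(a)\) for competitors \((t',x')\) that share the optimal solution's order, which is strictly weaker than the lemma's quantification over all of \(\mathcal S\). The repair is exactly the step you set up but did not use: normalize \emph{both} the optimal solution and the competitor to colliding clients, so that both rendezvous with \(a\) are pinned to the client's trajectory line and the catch-up inequality \(|x(a)-x'(a)|=v\,(t(a)-t'(a))<t(a)-t'(a)\) holds independently of the prefix orders (with the degenerate case in which the colliding client waits at the previous completion point treated separately---a case the paper's own write-up also passes over).
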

	\begin{proof}
    Assume that an optimal solution \((t, x) \in \mathcal{S}\) exists. If \(t(a)\) is already minimal among all solutions in \(\mathcal{S}\), the lemma is proven.
    Otherwise, suppose there exists a solution \((t', x') \in \mathcal{S}\) such that \(t'(a) < t(a)\). Consider modifying the original solution \((t, x)\) by replacing the rendezvous at client \(a\) with the earlier time \(t'(a)\) and adjusting the corresponding client position to maintain feasibility.
    From Lemma~\ref{lemma:waitFreeClient}, the server's position at the rendezvous with client \(a\) must satisfy \(s(a) = x'(a) + v t'(a)\). If this condition holds, and since \(t'(a) < t(a)\), the client’s position \(x'(a)\) must be greater than \(x(a)\) to ensure the meeting occurs at \(s(a)\).

    It can be shown that the server, starting from \(x'(a)\) at time \(t'(a)\), can reach the position \(x(a)\) by time
    \(
    t'(a) + (x'(a) - x(a))/v + \tau(a) < (s(a) - x(a))/{v} + \tau(a) \leq t(a) + \tau(a) = c(a),
    \)
    where \(0 < v < 1\). This guarantees that the rendezvous at \(x(a)\) can occur without delaying the overall schedule.

    Define a modified solution \((t^*, x^*)\) where
    \[
    (t^* (a'), x^*(a') ) = \begin{cases}
        (t'(a'), x'(a')) & \text{if } a' \in A \\ (t(a'), x(a')) & \text{if } a' \notin A
    \end{cases}
    \]
    This modified solution remains feasible since all subsequent rendezvous timings and positions remain unchanged, and \(t^*(a) \leq t(a) \leq d(a)\). Furthermore, it maintains the original makespan, ensuring optimality.
    Thus, there exists an optimal solution \((t^*, x^*) \in \mathcal{S}\) where \(t^*(a) \leq t'(a)\) for all \((t', x') \in \mathcal{S}\), concluding the proof.
	\end{proof}

	Consider the CLTSP assuming zero-processing times, only deadlines and slow clients.
	In each of the proofs of Lemmas~\ref{lemma:orderPreservingNoProc},~\ref{lemma:waitFreeServer}, and~\ref{lemma:waitFreeClient}, we show that for a feasible instance there exists an optimal solution with the respective property through a constructive argument that begins with an arbitrary optimal solution. 
	It follows that there exists an optimal solution that is order preserving with wait-free server and colliding clients. 
	We design a dynamic programming algorithm that computes such an optimal solution.
	The general idea of the algorithm is to construct a pendulum-like trajectory, where the direction changes are determined by alternating between processing clients in \(L\) and \(R\).
	Furthermore, the order-preserving property of the solution establishes a order for the clients in each set, while the wait-free and colliding properties allow us to compute the positions of the server and clients based on the makespan of a state.
	
	Let \(\ell(1), \ell(2), \ldots, \ell(n_L)\) represent the clients in \(L\), ordered by their distance from the origin in non-decreasing order. Similarly, let \(r(1), r(2), \ldots, r(n_R)\) represent the clients in \(R\), also ordered by their distance from the origin in non-decreasing order. We now construct the dynamic programming table. Define a state \(S = (i, j, d)\). Let \(D(S)\) represent the minimum time required to process the first \(i\) clients in \(L\) and the first \(j\) clients in \(R\), with the server's latest rendezvous was with a client in \(d \in \{L,R\}\). 
	
	We may assume feasibility for the first clients such that \(|s(\ell(1))|/(1+v) \leq d(\ell(1))\) and  \(|s(r(1))|/(1+v) \leq d(r(1))\), otherwise the instance is not feasible. Given the first clients, we can compute the time of their rendezvous and the state space is initiated with 
	\begin{equation}\label{eq:DP1Init}
		D(S) =
		\begin{cases}
			\frac{|s(\ell(1))|}{1+v}, & \text{if } S = (1, 0, L), \\
			\frac{|s(r(1))|}{1+v}, & \text{if } S = (0, 1, R).
		\end{cases}
	\end{equation}
	
	The dynamic programming table is updated by the following recursive relation. 
	We proceed in lexicographic order. Assume all states of the dynamic programming table up to but not including \(S\) are filled in.
	By construction, the preceding state must correspond to a rendezvous with either the client \(\ell(i-1)\) or \(r(j-1)\). 
	Thus, all potential states preceding \(S\) form a set
	\[
	prec(S) =
	\begin{cases}
		\{ (i-1, j, L), (i-1, j, R) \}, & \text{if } S = (i, j, L), \\
		\{ (i, j-1, L), (i, j-1, R) \}, & \text{if } S = (i, j, R).
	\end{cases}
	\]
	Consider a state \(S = (i,j,d)\) and a state \(S' = (i',j',d') \in prec(S)\).
    If \(D(S')=\infty\) we can skip the next steps and set the transition costs to \(T(S, S')=\infty\), otherwise the state \(D(S')\) represents the time at which the server processes either \(\ell(i')\) or \(r(j')\).
	Since the server and clients are wait-free, we can compute the server’s position as follows
	\[
	X(S') \coloneqq
	\begin{cases} 
		s(\ell(i')) + vD(S'), & \text{if } d' = L, \\
		s(r(j')) - vD(S'), & \text{otherwise.}
	\end{cases}
	\]
	Similarly, the position of the next client, either \(\ell(i)\) or \(r(j)\), at time \(D(S')\) can be computed as follows
	\[
	X(S, S') \coloneqq
	\begin{cases} 
		s(\ell(i)) + vD(S'), & \text{if } d = L, \\
		s(r(j)) - vD(S'), & \text{otherwise.}
	\end{cases}
	\]
	Therefore, the additional time of the rendezvous of \(S\), when transitioning to \(S\), is determined by the time required for the server and the client to close the distance between their current positions. Furthermore, if the arrival time exceeds the deadlines, we represent the state transition as infeasible by assigning it an additional time of infinity. Thus, let \(C = d(\ell(i))\) if \(d = L\) and \(C=d(r(j))\), otherwise. The additional time is then computed as follows
	\[
	T(S', S) = 
	\begin{cases} 
		\frac{|X(S') - X(S, S')|}{1 + v}, & \text{if } D(S') + \frac{|X(S') - X(S, S')|}{1 + v} \leq C,\\
		\infty, & \text{otherwise}.
	\end{cases}
	\]
	Then, the minimum time of processing from all clients to and including in state \(S\) is determined by 
	\begin{equation}\label{eq:DP1Recursion}
		D(S) = \min_{S' \in prec(S)} D(S') + T(S',S)
	\end{equation}
	Finally, to determine the makespan, we must calculate the minimum time required to process all clients, including the additional time needed for the server to return to the origin, which is given by
	\begin{equation}\label{eq:DP1Termination}
		\min \left\{ D(S_L) + \left| s(\ell(n_L)) + vD(S_L) \right|, \, D(S_R) + \left| s(r_R) - vD(S_R) \right| \right\},
	\end{equation}
	where \(S_L = (n_L, n_R, L)\) and \(S_R = (n_L, n_R, R)\). If \(D(S_L)\) and \(D(S_R)\) are both infinity, the algorithm returns {\sc Infeasible}.
	
	The dynamic programming algorithm is defined by its initialization in Equation~\eqref{eq:DP1Init} and its recursion in Equation~\eqref{eq:DP1Recursion}. The algorithm concludes by returning the minimum makespan as specified in Equation~\eqref{eq:DP1Termination} or returning {\sc Infeasible}. 
    We refer to Equations~\eqref{eq:DP1Init},~\eqref{eq:DP1Recursion} and~\eqref{eq:DP1Termination} as Algorithm 1.
	\begin{theorem}\label{thm:TSPNoProcDeadlines}
		If there are zero-processing times and there are only deadlines and if the clients are slow, 
		Algorithm 1 solves the CLTSP in \(O(n^2)\) time.
	\end{theorem}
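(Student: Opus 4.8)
The plan is to establish two things: that Algorithm~1 returns the optimal makespan (or correctly reports infeasibility), and that it runs in $O(n^2)$ time. For correctness I would first invoke the structural results: by Lemmas~\ref{lemma:orderPreservingNoProc}, \ref{lemma:waitFreeServer} and~\ref{lemma:waitFreeClient}, whenever the instance is feasible there is an optimal solution that is simultaneously order-preserving, has a wait-free server, and has colliding clients. This reduces the search to ``pendulum'' trajectories: the clients of $L$ must be served in order of increasing distance $\ell(1),\dots,\ell(n_L)$, the clients of $R$ in order $r(1),\dots,r(n_R)$, and the whole schedule is then determined purely by the interleaving of these two sorted sequences. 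A state $(i,j,d)$ records that the first $i$ clients of $L$ and the first $j$ clients of $R$ have been served and that the last rendezvous was on side $d$, which is exactly the information needed to describe a prefix of such a trajectory.

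Next I would verify that $D(S)$ together with $d$ determines the server's position, and that the transition cost is correct. Because the server is wait-free and the last served client is colliding, the rendezvous time $D(S')$ pins the server's position down to $X(S')$. When moving on to the next client on side $d$, the server and that colliding client approach each other---the server at unit speed and the client at speed $v$---so they close their initial gap $|X(S')-X(S,S')|$ at combined speed $1+v$, which is exactly the additional time recorded in $T(S',S)$; the accompanying deadline test enforces $r(a)\le t(a)\le d(a)$. The termination step~\eqref{eq:DP1Termination} then simply adds the wait-free return trip to the origin from each of the two possible final sides and takes the minimum.

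The main obstacle, and the heart of the proof, is the optimal-substructure (dominance) argument justifying the recursion~\eqref{eq:DP1Recursion}. I would argue by induction in lexicographic order that $D(S)$ equals the minimum achievable rendezvous time over all feasible order-preserving, wait-free, colliding solutions that serve exactly the clients indexed by $(i,j)$ with a side-$d$ client last. The key point is that a smaller value of $D(S)$ dominates: since the clients are slow ($0<v<1$), an earlier rendezvous with the last client leaves the server at a position from which every subsequent rendezvous can be reached no later than before. This is precisely the content of Lemma~\ref{lemma:dominance} specialized to $\tau(a)=0$, which guarantees that minimizing the time to reach a state never sacrifices global optimality. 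Hence taking the minimum over the (at most two) predecessors in $prec(S)$ preserves the invariant, and states that cannot be reached feasibly correctly receive the value $\infty$; if both $D(S_L)$ and $D(S_R)$ are infinite then no pendulum solution exists, and by the structural lemmas the instance is genuinely infeasible.

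Finally, for the running time I would observe that there are $n_L\cdot n_R\cdot 2 = O(n^2)$ states, each with at most two predecessors and requiring only constant arithmetic to evaluate $X(\cdot)$, $X(\cdot,\cdot)$ and $T(\cdot,\cdot)$, while the initial sorting of $L$ and $R$ by distance costs $O(n\log n)$. The total running time is therefore $O(n^2)$, as claimed.
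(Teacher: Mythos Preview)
Your proposal is correct and follows essentially the same approach as the paper: you invoke the same structural Lemmas~\ref{lemma:orderPreservingNoProc}, \ref{lemma:waitFreeServer}, \ref{lemma:waitFreeClient} to restrict attention to order-preserving, wait-free, colliding solutions, appeal to Lemma~\ref{lemma:dominance} for the optimal-substructure/dominance step, and bound the running time by the $O(n^2)$ state count with $O(1)$ work per state. If anything, your write-up is more explicit than the paper's in justifying the transition formula and the induction invariant.
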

	\begin{proof}
		We discuss the time complexity and the correctness of the algorithm separately.
		\medskip 
		
		\noindent \emph{Time complexity.} Note that the state space is limited by the \(O(n^2)\), while the computations in each state can be done in \(O(1)\) time. Therefore, the runtime complexity is \(O(n^2)\).
		
		\medskip
		\noindent \emph{Correctness.} 
        By lemma~\ref{lemma:dominance}, we know that a feasible instance there exists an optimal solution in which the time at each state in the dynamic programming table is minimized.
		Based on our discussion, it is clear that the dynamic programming algorithm enumerates all solutions that are order-preserving and with wait-free server and colliding clients, while pruning a path if a state can be reached in an earlier time by an alternative path.
		By Lemmas~\ref{lemma:orderPreservingNoProc},~\ref{lemma:waitFreeServer}, and~\ref{lemma:waitFreeClient}, we know that such an optimal solution must exist for a feasible instance and thus, the algorithm must return an optimal solution or {\sc Infeasible}, if such solution does not exist.
	\end{proof}
	
	Consider the CSTP with general processing times, no time constraints and slow clients. 
	Analogous to the previous problem variant, we know that an optimal solution exists that is order-preserving and wait-free server and colliding clients.
	We design a dynamic programming algorithm to compute such an optimal solution. 
	The general approach is similar to the previous algorithm, as we are again constructing a pendulum-like trajectory.
	In this problem variant, there are two ways for clients to reach their rendezvous positions. 
	A rendezvous with a client is referred to as \emph{waiting rendezvous}, meaning that the server is busy processing other clients upon the client’s arrival. 
	In this case, the client is processed as soon as the server finishes processing the clients that arrived earlier at that position.
	Otherwise, the rendezvous are referred to as \emph{wait-free rendezvous} when the client is processed immediately upon arrival. 
	The time of wait-free rendezvous is used to determine the position of the server, while the processing time of wait-free and waiting rendezvous is needed to compute the time at which the server departs from those positions.
	
	We now construct the dynamic programming table.
	Let \(S = (i, j, d)\), the state \(D(S)\) represents the following. 
	If \(i = n_L\) and \(j = n_R\), it denotes the time of processing all clients and returning to the origin, while the latest wait-free rendezvous was with a client in \(d\in \{L,R\}\).
	Otherwise, If \(d = L\), it denotes the time of a wait-free rendezvous with \(\ell(i)\), with all clients \(\ell(1), \ell(2) \ldots, \ell(i-1)\) and \(r(1), r(2) \ldots, r(j)\) already processed.
	If \(d = R\), it denotes the start of the time of a direct rendezvous with \(r(j)\), with all clients \(r(1), r(2) \ldots, r(j-1)\) and \(\ell(1), \ldots, \ell(i)\) already processed. 
	
	Given the first client, we can compute the time of their rendezvous and the state space is initiated with 
	\begin{equation}\label{eq:DP2Init}
		D(S) =
		\begin{cases}
			\frac{|s(\ell(1))|}{1+v}, & \text{if } S = (1, 0, L), \\
			\frac{|s(r(1))|}{1+v}, & \text{if } S = (0, 1, R).
		\end{cases}
	\end{equation}
	The dynamic programming table is updated by the following recursive relation. 
	We proceed in lexicographic order. 
	Assume all states of the dynamic programming table up to but not including \(S\) are filled in.
	By construction, the preceding state must correspond to a direct rendezvous with a client preceding either \(\ell(i)\) or \(r(j)\). Thus, all potential states preceding \(S\) form a set
		\[
	prec(S) =
	\begin{cases}
		\{ (i', j', d') \mid 0 \leq i' \leq i-1, \, 0 \leq j' \leq j, \, d' \in \{L,R\}\}, & \text{if } S = (i, j, L), \\
		\{ (i', j', d') \mid 0 \leq i' \leq i, \, 0 \leq j' \leq j-1, \, d' \in \{L,R\}\}, & \text{if } S = (i, j, R).
	\end{cases}
	\]
	
	Consider a state \(S\) and a state \(S'=(i', j', d') \in prec(S)\). 
	The state \(D(S')\) represents the time at which the server starts to processes either \(\ell(i')\) or \(r(j')\) in a direct rendezvous.
	Since the rendezvous is direct and server and clients are wait-free, we can compute the server's position as follows
	\[
	X(S') \coloneqq
	\begin{cases} 
		s(\ell(i')) + vD(S'), & \text{if } d' = L, \\
		s(r(j')) - vD(S'), & \text{otherwise.}
	\end{cases}
	\]
	At position \(X(S')\), the server processes client \(\ell(i')\) if \(d' = L\), and \(r(j')\) otherwise. While the server is busy, clients from either direction may arrive. We know that there exists an optimal solution in which the server processes each arriving client sequentially. This extends the time the server is busy, during which additional clients may arrive. Thus, we recursively determine the arriving clients until the server is no longer busy. Denote the last clients from each direction as \(\ell(i'')\) and \(r(j'')\) and the time at which the last client is finished with \(T(S')\). Thus, the server has processed clients \(\ell(1), \ell(2), \ldots, \ell(i'')\) and \(r(1), r(2), \ldots, r(j'')\) at time \(T(S')\) and is at position~\(X(S')\).
	
	Assume that \(S = (n_L, n_R, d')\) and \(i'' = n_L\) and \(j'' = n_R\). This represents the case where the server has processed all remaining clients at position \(X(S')\), and the state \(D(S)\) represents the final state. In this case, the server finishes processing at time \(T(S')\) and then returns to the origin, requiring an additional \(|X(S')|\) time units.
	
	Otherwise, assume that \(S = (i'' + 1, j'', L)\) or \(S = (i'', j''+1, R)\). 
	This represents the case, where the server has processed all clients up to \(i''\) and \(j''\), departs from position \(X(S')\) at time \(T(S')\) for a direct rendezvous with client \(\ell(i''+1)\) or \(r(j''+1)\). 
	At time \(T(S')\), the position of direct rendezvous is computed with
	\[
	X(S, S') \coloneqq
	\begin{cases} 
		s(\ell(i'' + 1)) + vT(S'), & \text{if } d = L, \\
		s(r(j'' + 1)) - vT(S'), & \text{otherwise.}
	\end{cases}
	\]
	Thus, the time of rendezvous of \(S\), when transitioning to state \(S'\), is given by 
	\[
	T(S, S') =
	\begin{cases}
		T(S') + |X(S')|, & \text{if } S = (n_L, n_R, d') \text{ and } i'' = n_L \text{ and } j'' = n_R  \\
		T(S') + \frac{|X(S') - X(S, S')|}{1+v}, & \text{else, if } S = (i'' + 1, j'', L) \text{ or } S = (i'', j'' + 1, R), \\
		\infty, & \text{otherwise.}
	\end{cases}
	\]
	Then, the minimum time of processing all clients up to and including state in \(S\) is determined by 
	\begin{equation}\label{eq:DP2Recursion}
		D(S) = \min_{S' \in prec(S)} T(S,S')
	\end{equation}
	Finally, the minimum makespan is computed with
	\begin{equation}\label{eq:DP2Termination}
		\min\{D(n_L, n_R, L), D(n_L, n_R, R)\}
	\end{equation}
	
	The dynamic programming algorithm is defined by its initialization in Equation~\eqref{eq:DP2Init} and its recursion in Equation~\eqref{eq:DP2Recursion}. The algorithm concludes by returning the minimum makespan as specified in Equation~\eqref{eq:DP2Termination}. We refer to Equations~\eqref{eq:DP2Init},~\eqref{eq:DP2Recursion} and~\eqref{eq:DP2Termination} as Algorithm 2. 
	\begin{theorem}\label{thm:TSPGenProcUnconstrained}
		If there are general processing times and there are no time constraints and if the clients are slow, 
		Algorithm 2 solves the CLTSP in \(O(n^3)\) time.
	\end{theorem}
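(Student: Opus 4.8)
As in the proof of Theorem~\ref{thm:TSPNoProcDeadlines}, I would argue the running time and the correctness separately, and I would organize the correctness part as (i) a reduction to a structural normal form, (ii) a proof that the states and transitions of Algorithm~2 faithfully enumerate the solutions in this normal form, and (iii) a dominance argument showing that retaining only the minimal value \(D(S)\) at each state preserves optimality.

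For the running time, the plan is to evaluate the recursion~\eqref{eq:DP2Recursion} by a forward sweep over the \(O(n^2)\) states taken in lexicographic order, rather than literally ranging over the quadratically large sets \(prec(S)\). The point is that the batch computation associated with a state \(S'=(i',j',d')\) — start the server at the direct rendezvous position \(X(S')\) at time \(D(S')\) and repeatedly absorb, in arrival order, the next not-yet-processed client of \(L\) or \(R\) as long as it reaches \(X(S')\) before the server would become idle, until no further client arrives in time — depends only on \(S'\). Each such computation advances \(i''\) or \(j''\) by one per step and performs an \(O(1)\) arrival-time comparison, so it costs \(O(n)\), after which at most the two successors \((i''+1,j'',L)\) and \((i'',j''+1,R)\) are relaxed in \(O(1)\). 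Since each state is swept exactly once and there are \(O(n^2)\) of them, the total time is \(O(n^3)\).

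For correctness, I would first note that our regime (general processing times, no time constraints, slow clients) is a special case of the hypotheses of Lemmas~\ref{lemma:orderPreservingGenProc}, \ref{lemma:waitFreeServer}, and~\ref{lemma:waitFreeClient} — in particular, ``no time constraints'' is the subcase of ``only deadlines'' with all deadlines infinite — so there exists an optimal solution that is order-preserving with a wait-free server and colliding clients. The central structural claim to establish is that every such solution decomposes into batches: since the server stays fixed while processing, every client served during a processing interval is a waiting rendezvous at that same position, while each batch is opened by a single wait-free (direct) rendezvous reached after a wait-free traversal. Order-preservation then forces the set of clients served up to any batch boundary to be a pair of prefixes \(\ell(1),\dots,\ell(i)\) and \(r(1),\dots,r(j)\), which is exactly the information recorded by a state \((i,j,d)\). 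Using the colliding property, \(D(S)\) pins down the batch position via \(X(S')=s(\cdot)\pm vD(S')\) together with the arrival times of the remaining clients, so the transition cost \(T(S,S')\) reproduces precisely the duration of absorbing a maximal busy batch and then closing the gap to the next direct rendezvous at combined speed \(1+v\); the terminal expression~\eqref{eq:DP2Termination} adds the final return to the origin. Hence the DP enumerates all normal-form solutions, pruning a partial trajectory whenever an alternative reaches the same state earlier.

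It remains to justify the pruning. For a state \(S=(i,j,d)\) the processed set is \(A'=\{\ell(1),\dots,\ell(i),r(1),\dots,r(j)\}\) and the last processed client is \(a=\ell(i)\) or \(a=r(j)\); by order-preservation every normal-form solution reaching \(S\) belongs to the class \(\mathcal S\) of Lemma~\ref{lemma:dominance}, whose hypotheses again cover our regime. That lemma supplies, whenever \(\mathcal S\) contains an optimal solution, an optimal solution minimizing \(t(a)=D(S)\), so it is safe to keep only the earliest time at each state. Combining the existence of a normal-form optimum, the faithful transitions, and this dominance yields that the value returned by~\eqref{eq:DP2Termination} is the optimal makespan. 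I expect the main obstacle to lie in the batch characterization of step (ii): one has to verify carefully that, in the wait-free/colliding normal form, the clients swept into a batch are exactly those whose colliding arrival at the batch position precedes the server's departure, and that the greedy left/right merge performed by the sweep coincides with the actual processing order, including the geometric check (using \(v<1\)) that the server indeed approaches the next direct-rendezvous client rather than overshooting it.
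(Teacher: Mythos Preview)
Your proposal is correct and follows essentially the same approach as the paper: the running-time argument via forward iteration with an \(O(n)\) batch sweep per state over \(O(n^2)\) states, and the correctness argument via Lemmas~\ref{lemma:orderPreservingGenProc}, \ref{lemma:waitFreeServer}, \ref{lemma:waitFreeClient} for the normal form together with Lemma~\ref{lemma:dominance} for the pruning, are exactly what the paper does. If anything, your outline of the batch characterization in step~(ii) is more explicit than the paper's, which simply asserts that the DP enumerates the relevant solutions.
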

	\begin{proof}
		We discuss the time complexity and the correctness of the algorithm separately.
		\medskip 
		
		\noindent \emph{Time complexity.} Note that the state space is bounded by \(O(n^2)\). 
		The while-loop is executed in each state and takes \(O(n)\) time.
		When the algorithm is implemented using forward iteration, there are at most two states reachable and these are returned by the while-loop. 
		All other state transitions correspond to infeasible solutions and can be pruned. 
		Thus, each state requires \(O(n)\) time. 
		It follows, that, with a reasonable implementation, the algorithm solves the problem in \(O(n^3)\) time.
		
		\medskip
		\noindent \emph{Correctness.} By lemma~\ref{lemma:dominance}, we know that there exists an optimal solution in which the time at each state in the dynamic programming table is minimized.
		Based on our discussion, it is clear that the dynamic programming algorithm enumerates all solutions that are order-preserving and with wait-free server and colliding clients, while pruning a path if a state can be reached in an earlier time by an alternative solution and if a state is not reachable.
		By Lemmas~\ref{lemma:orderPreservingGenProc},~\ref{lemma:waitFreeServer}, and~\ref{lemma:waitFreeClient}, we know that such an optimal solution must exist and thus, the algorithm must return an optimal solution.
	\end{proof}
	
	\section{Algorithms for the Line Traveling Repairman Problem with collaboration}\label{sec:algorithmsTRP}
	
	In this section, we develop an algorithm that computes an optimal solution for the CLTRP assuming zero-processing times and only deadlines and slow clients. Similar to the previous part, before discussing the algorithm, we give structural lemmas that show there exists an optimal solution with a specific structure. 
	
	The first two lemmas are analogues of Lemmas~\ref{lemma:orderPreservingNoProc},~\ref{lemma:waitFreeServer} and~\ref{lemma:waitFreeClient}. Their validity follows straightforwardly from similar arguments, and we omit their proofs.
	
	\begin{lemma}\label{lemma:orderPreservingTRP}
		If there are zero-processing times and only deadlines, then given an optimal solution \((t, x)\) for the CLTRP, we have that for any clients \(a, a'\in L\) or \(a, a'\in R\), such that \( |s(a)| < |s(a')|\), we have that \(t(a) < t(a')\). In other words, the solution is order-preserving.
	\end{lemma}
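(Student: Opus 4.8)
The plan is to reuse the exchange argument from the proof of Lemma~\ref{lemma:orderPreservingNoProc} almost verbatim, but to exploit the fact that for the sum-of-completion-times objective the same exchange is \emph{strictly} improving. Concretely, I would argue by contradiction: suppose \((t,x)\) is an optimal CLTRP solution that is not order-preserving. Then there is an inversion, say two clients \(a, a' \in R\) with \(s(a) < s(a')\) but \(t(a) > t(a')\) (the case \(a, a' \in L\) is symmetric). The goal is to produce a feasible solution with strictly smaller \(C_{\text{sum}}\), which is the desired contradiction.

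For the geometric step I would invoke exactly the construction in Lemma~\ref{lemma:orderPreservingNoProc}. Writing \(f(t) = s(a) - vt\) and \(g(t) = s(a') - vt\) for the full-speed trajectories of \(a\) and \(a'\) toward the origin, reachability of the rendezvous with \(a'\) gives \(x(a') \ge g(t(a')) > f(t(a'))\), so at time \(t(a')\) the server lies strictly above \(f\), while at time \(0\) it sits at the origin, strictly below \(f(0) = s(a) > 0\). By continuity the server trajectory meets \(f\) at some point \((\bar t, \bar x)\) with \(\bar t < t(a') < t(a)\). I would then define \((t',x')\) by reassigning the rendezvous with \(a\) to \((\bar t, \bar x)\) and leaving every other rendezvous unchanged, so that the server trajectory is literally unaltered. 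Feasibility is immediate: every meeting remains reachable by the server (same trajectory) and by its client (the new point lies on \(a\)'s own trajectory), the server still returns to the origin, and the deadline is respected because \(\bar t < t(a) \le d(a)\).

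The decisive difference from the makespan setting appears in the objective. With zero-processing times \(C_{\text{sum}}(t,x) = \sum_{i=1}^n c_i = \sum_{a'' \in A} t(a'')\), and the exchange changes exactly one term, replacing \(t(a)\) by \(\bar t < t(a)\) while leaving all other rendezvous times fixed. Hence \(C_{\text{sum}}(t',x') = C_{\text{sum}}(t,x) - (t(a) - \bar t) < C_{\text{sum}}(t,x)\), contradicting optimality of \((t,x)\). Since the same reasoning applies to any inversion in \(L\), no optimal solution can contain an inversion, which is the claim. Note that this is stronger than the CLTSP statement: there the exchange preserved the makespan and only yielded the \emph{existence} of an order-preserving optimum, whereas here every optimal solution is forced to be order-preserving.

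The one point that requires care — and the only place the argument is not purely mechanical — is the strict versus non-strict inequality. The exchange above rules out strict inversions \(t(a) > t(a')\), so every optimal solution satisfies \(t(a) \le t(a')\); a tie \(t(a) = t(a')\) with \(s(a) \ne s(a')\) can only occur when both clients meet the server at a common point at the same instant, a degenerate simultaneous rendezvous. I would handle this either by breaking such ties consistently by distance from the origin (which changes no completion time) or by a standard infinitesimal perturbation of the starting positions, either of which upgrades \(\le\) to the strict order claimed in the statement.
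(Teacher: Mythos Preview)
Your proposal is correct and matches the paper's intended argument: the paper omits the proof of this lemma and simply states that it ``follows straightforwardly from similar arguments'' as Lemma~\ref{lemma:orderPreservingNoProc}, which is exactly the exchange you carry out. Your additional observation that the exchange is \emph{strictly} improving for \(C_{\text{sum}}\), together with your remark on handling the degenerate tie case, is precisely the small adaptation needed to upgrade the ``there exists an order-preserving optimum'' conclusion of Lemma~\ref{lemma:orderPreservingNoProc} to the ``every optimum is order-preserving'' formulation stated here.
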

	\begin{lemma}\label{lemma:waitFreeServerTRP}
	If there are only deadlines and clients are slow, then given an optimal solution \((t, x)\) for the CLTRP, we have that \(t_{i} = c_{i-1} + |x_{i} - x_{i-1}|\) for all \(1 \leq i \leq n+1\). In other words, the server is wait-free.
\end{lemma}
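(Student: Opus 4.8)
The plan is to mirror the exchange argument of Lemma~\ref{lemma:waitFreeServer} and to change only the final objective comparison, since the two statements share an identical feasibility structure and differ exclusively in what is minimized (makespan there, sum of completion times here). I would start from an optimal CLTRP solution \((t,x)\) and suppose it is not wait-free, so that there is an index \(i \ge 1\) with \(t_i > c_{i-1} + |x_i - x_{i-1}|\), meaning the server idles before the \(i\)-th rendezvous. Writing \(a = \sigma_i\) and letting \(s'(a)\) denote the position of \(a\) at time \(c_{i-1}\), I would assume without loss of generality that \(a \in R\) and \(s'(a) \ge x_{i-1}\); the case \(a \in L\) and the opposite ordering follow by reflecting the line.

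The construction I would use is the collision meeting: let the server (moving toward the client at unit speed) and the client (moving toward the server at speed \(v\)) depart at time \(c_{i-1}\) and meet at \(t_i' = c_{i-1} + (s'(a) - x_{i-1})/(1+v)\) and the corresponding position, while leaving every other rendezvous \((t_j, x_j)\), \(j \neq i\), unchanged. By construction this rendezvous is reachable by the client, and \(t_i' \le t_i\).

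Verifying feasibility is the step I expect to carry the weight of the argument, and it is where slowness is essential. Because \(v < 1\), the server can travel from the collision point to the original position \(x_i\) strictly faster than the client did, so it reaches \(x_i\) no later than time \(t_i\); combining this with a triangle-inequality estimate shows that \(t_{i+1} \ge c_i' + |x_{i+1} - x_i'|\), i.e.\ the \((i{+}1)\)-th rendezvous — and hence, since nothing downstream was altered, every later rendezvous — remains reachable by the server. The deadline constraints are preserved since \(t'(a) \le t(a) \le d(a)\) and no other rendezvous time increases, so the modified solution \((t',x')\) is feasible.

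Finally, for the objective I would observe that the modification decreases \(c_i\) (since \(t_i' \le t_i\) and \(\tau(a)\) is unchanged) while leaving every other completion time fixed, whence \(C_{\text{sum}}(t',x') \le C_{\text{sum}}(t,x)\); this comparison is the only place the CLTRP argument departs from Lemma~\ref{lemma:waitFreeServer}, where a makespan comparison was made instead. Thus \((t',x')\) is again optimal and its \(i\)-th rendezvous is now wait-free. Applying the exchange to the rendezvous in increasing order of \(i\) — noting that fixing rendezvous \(i\) leaves all earlier rendezvous untouched, so no previously corrected rendezvous is spoiled — yields, after at most \(n\) steps, an optimal solution in which the server is wait-free throughout.
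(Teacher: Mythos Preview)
Your proposal is correct and follows precisely the approach the paper intends: the paper omits the proof of Lemma~\ref{lemma:waitFreeServerTRP} and states that it ``follows straightforwardly from similar arguments'' to Lemma~\ref{lemma:waitFreeServer}, and your write-up is exactly that argument with the single necessary change---replacing the makespan comparison by the observation that only \(c_i\) decreases while all other completion times are unchanged, so \(C_{\text{sum}}\) cannot increase.
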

	\begin{lemma}\label{lemma:waitFreeClientTRP}
		If there are only deadlines, given an optimal solution \((t, x)\) for the CLTRP for all \(a\in R\), we have that \(t(a) = s(a) - |s(a) - x(a)|/v\). For all \(a\in L\), we have that \( t(a) = s(a) + |s(a) - x(a)|/v \). In other words, the clients are colliding.
	\end{lemma}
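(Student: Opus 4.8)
The plan is to reuse the exchange argument behind Lemma~\ref{lemma:waitFreeClient} almost verbatim, the only genuinely new ingredient being a check that the local modification does not increase the \emph{sum} of completion times rather than merely the makespan. First I would fix an optimal CLTRP solution $(t,x)$ and, appealing to Lemma~\ref{lemma:waitFreeServerTRP}, assume the server is wait-free; by Lemma~\ref{lemma:orderPreservingTRP} I may also take it to be order-preserving. Suppose some client fails to be colliding and, by the left--right symmetry, assume it is $a = \sigma_i \in R$. As in Lemma~\ref{lemma:waitFreeClient}, let $s'(a)$ be the position of $a$ at time $c_{i-1}$; since a client only stalls once it has reached its rendezvous position, $a$ has not waited before $c_{i-1}$, so I may assume $s'(a) \ge x_{i-1}$.

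Next I would perform the identical surgery: replace the rendezvous with $a$ by the point at which the server, leaving $x_{i-1}$ rightward at unit speed at time $c_{i-1}$, collides with the client, leaving $s'(a)$ leftward at speed $v$, namely
\[
(t'_i, x'_i) = \left( c_{i-1} + \frac{s'(a) - x_{i-1}}{1+v},\ s'(a) - v\,\frac{s'(a) - x_{i-1}}{1+v} \right),
\]
while keeping $(t'_j, x'_j) = (t_j, x_j)$ for every $j \ne i$. By construction this rendezvous is reachable by $a$, and since $t'_i \le t_i \le d(a)$ the deadline of $a$ is met while no other deadline is disturbed. The server's tail stays feasible exactly as in Lemma~\ref{lemma:waitFreeClient}: because both $(t_i,x_i)$ and $(t'_i,x'_i)$ are reached by a wait-free server one has $c_i = t'_i + (x_i - x'_i) + \tau(a)$, so the server is back at the original position $x_i$ no later than the original completion time $c_i$ and can therefore execute rendezvous $i+1, \ldots, n+1$ precisely as before.

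This last point is both where I expect the argument to carry its weight and where the repairman objective departs from the salesman one. Since every rendezvous other than $i$ is left unchanged, each completion time $c_j$ with $j \ne i$ is identical in $(t',x')$ and $(t,x)$, whereas $c_i$ can only decrease; hence $C_{\text{sum}}(t',x') \le C_{\text{sum}}(t,x)$ and $(t',x')$ is again optimal, now with $a$ colliding. The case $a \in L$ is handled symmetrically, and iterating the exchange over all non-colliding clients produces an optimal solution in which every client travels directly to its rendezvous at speed $v$; reading off the resulting rendezvous time as the direct travel time from $s(a)$ to $x(a)$ gives the asserted identity.
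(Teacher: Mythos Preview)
Your argument is correct and is precisely the adaptation the paper has in mind: the paper omits this proof, stating that it follows from arguments analogous to Lemma~\ref{lemma:waitFreeClient}, and your only addition---that the local exchange leaves every $c_j$ with $j\neq i$ untouched while $c_i$ can only decrease, so $C_{\text{sum}}$ is non-increasing---is exactly the extra check needed for the repairman objective. The one wrinkle is that invoking Lemma~\ref{lemma:waitFreeServerTRP} silently imports a slow-clients hypothesis absent from the present lemma's statement, but this mirrors the paper's own reliance on wait-freeness in the proof of Lemma~\ref{lemma:waitFreeClient} and is immaterial here since the lemma is only applied in the slow-client regime.
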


	In the case of fast clients with zero processing times and only deadlines, it is straightforward to construct instances where it is optimal for the server to wait. However, for fast clients with no time constraints, we are not aware of any such examples. We conjecture that there exists an optimal solution in which the server is wait-free, but are not aware of a proof. We pose this question as an open problem for further research. If the conjecture holds true, the following dynamic programming algorithm can be extended for fast clients.
	
	\subsection{Algorithm}
	Consider the Traveling Repairman Problem with collaboration with zero-processing times and only deadlines. 
	From the previous Lemmas we know, that an optimal solution exists that is order-preserving and wait-free. 
	We design a dynamic programming algorithm to compute such an optimal solution. 
	The general approach is similar to the algorithms of the Traveling Salesman Problem with collaboration, as we again constructing a pendulum-like trajectory. 
	However, since the objective function involves the sum of completion times, it is not possible to determine the positions of the server and client solely based on a state value. 
	Therefore, it is necessary to include time explicitly in the state description.
	Let \(a_1, a_2, \ldots, a_n\) denote the clients, sorted in an order-preserving manner within \(L\) and \(R\), while alternating between clients from \(L\) and \(R\) whenever possible.
	We define the upper bound on the time that the server travels until he has reached all clients across all optimal solutions with 
	\[T \coloneqq |s(a_1)|  +  \sum_{i \geq 2} |s(a_{i}) - s(a_{i-1})| \leq \sum_{a\in A} 2|s(a)|.\]
	
	We now construct the dynamic programming table.
	Let \(S= (i,j,d,t)\), the state \(D(S)\) represents the minimum bound of the sum of completion times of all clients, while for \(d=L\) the latest direct rendezvous was at time \(t\) with client \(\ell(i)\), with all clients \(\ell(1), \ell(2), \ldots, \ell(i-1)\) and \(r(1), r(2), \ldots, r(j)\) already processed, and for \(d=R\) the latest rendezvous was at time \(t\) with client \(r(j)\), with all clients \(\ell(1), \ell(2), \ldots , \ell(i)\) and \(r(1), r(2), \ldots, r(j-1)\) already processed.
	
	We may assume that \(\frac{|s(\ell(1))|}{1+v} \leq d(\ell(1))\) and  \(\frac{|s(r(1))|}{1+v} \leq d(r(1))\). Given the first client, we can compute the time of their rendezvous and the state space is initiated with 
	\begin{equation}\label{eq:DP3Init}
		D(S) =
		\begin{cases}
			n\cdot \frac{ |s(\ell(1))|}{1+v}, & \text{if } S = (1, 0, L, \tfrac{|s(\ell(1))|}{1+v}), \\
			n\cdot \frac{|s(r(1))|}{1+v}, & \text{if } S = (0, 1, R, \tfrac{|s(r(1))|}{1+v}).
		\end{cases}
	\end{equation}
	The dynamic programming table is updated by the following recursive relation. 
	We proceed in lexicographic order. Assume all states of the dynamic programming table up to but not including \(S\) are filled in.
	By construction, the preceding state must correspond to a rendezvous with either client \(\ell(i-1)\) or \(r(j-1)\). Thus, all potential states preceding \(S\) form a set \[
	prec(S) =
	\begin{cases}
		\{ (i-1, j, d', t') \mid d' \in \{L,R\}, \, 0 \leq t' \leq t\}, & \text{if } S = (i, j, L), \\
		\{ (i, j-1, d', t') \mid d' \in \{L,R\}, \, 0 \leq t' \leq t\}, & \text{if } S = (i, j, R).
	\end{cases}
	\]
	Consider State \(S\) and a state \(S' = (i', j', d', t') \in {prec} (S)\). 
	The time at which the server processes either \(\ell(i')\) or \(r(j')\) is given by \(t'\). 
	Since the server and the clients are wait-free, we can compute the server's position as follows
	\[
	X(S') \coloneqq
	\begin{cases} 
		s(\ell(i')) + vt', & \text{if } d' = L, \\
		s(r(j')) - vt', & \text{otherwise.}
	\end{cases}
	\]
	Similarly, the position of the next direct client, either \(\ell(i)\) or \(r(j)\), at time \(t'\) can be computed as follows
	\[
	X(S, S') \coloneqq
	\begin{cases} 
		s(\ell(i)) + vt', & \text{if } d = L, \\
		s(r(j)) - vt', & \text{otherwise.}
	\end{cases}
	\]
	Let \(C = d(\ell(i))\) if \(d = L\) and \(C = d(r(j))\), otherwise. The increment on the state value is given by 
	\[
	T(S, S') = 
	\begin{cases}
		(n-i-j+1)(t-t'), & \text{if } t\le C \text{ and } t - t' = \frac{|X(S') - X(S,S')|}{1+v}, \\
		\infty, & \text{otherwise.}
	\end{cases}
	\]
	Then, the minimum bound of the sum of completion times for state \(S\) is determined by
	\begin{equation}\label{eq:DP3Recursion}
		D(S) = \min_{S' \in prec(S)} D(S') + T(S,S')
	\end{equation}
	Finally, the minimal sum of completion times for all clients is computed with
    \begin{equation}\label{eq:DP3Termination}
    \min  \{ D(n_L, n_R, d, t) \mid d\in \{L,R\}, \, 0 \leq t \leq T\}
    \end{equation}
	The dynamic programming algorithm is defined by its initialization in Equation~\eqref{eq:DP3Init} and its recursion in Equation~\eqref{eq:DP3Recursion}. The algorithm concludes by returning the minimum sum of completion times as specified in Equation~\eqref{eq:DP3Termination}. We refer to Equations~\eqref{eq:DP3Init},~\eqref{eq:DP3Recursion} and~\eqref{eq:DP3Termination} as Algorithm 3.
	
	\begin{theorem} \label{thm:TRP}
		If there are zero-processing times and there are only deadlines and clients are slow, the Algorithm 3 solves the CLTRP in \(O((Tn)^2)\) time.
	\end{theorem}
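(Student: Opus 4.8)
The plan is to mirror the two-part structure of the proofs of Theorems~\ref{thm:TSPNoProcDeadlines} and~\ref{thm:TSPGenProcUnconstrained}, treating time complexity and correctness separately, the new feature being that time now enters the state explicitly. For the running time I would count states and transitions. A state $S=(i,j,d,t)$ has $i\in\{0,\dots,n_L\}$, $j\in\{0,\dots,n_R\}$, $d\in\{L,R\}$, and $t$ ranging over the admissible rendezvous times. The key point is that every such time lies in $[0,T]$: since the clients are slow and colliding, collaboration only shortens the server's travel, so the last rendezvous of any order-preserving, wait-free solution occurs no later than the time $T$ the server would need to traverse the maximal alternating pendulum route unaided. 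Under the unary encoding this leaves $O(T)$ values of $t$, so the state space has size $O(n^2T)$. For a fixed $S$, the set $prec(S)$ ranges over $d'\in\{L,R\}$ and $0\le t'\le t$, i.e. $O(T)$ predecessors, each handled in $O(1)$ via the closed forms $X(S')$ and $X(S,S')$. Multiplying yields $O(n^2T)\cdot O(T)=O((Tn)^2)$.

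For correctness I would first invoke Lemmas~\ref{lemma:orderPreservingTRP},~\ref{lemma:waitFreeServerTRP} and~\ref{lemma:waitFreeClientTRP} to conclude that a feasible instance admits an optimal solution that is order-preserving with a wait-free server and colliding clients. In such a solution every rendezvous position is pinned down by its time and the client's release position through the colliding identities, so $(i,j,d,t)$ is a sufficient statistic: it fixes the current server position $X(S')$, the deadline check against $C$, and the travel increment $|X(S')-X(S,S')|/(1+v)$. I would then verify the cost recursion. Writing the processed clients' rendezvous times as $0=\tau_0\le\tau_1\le\dots\le\tau_n$, telescoping gives $C_{\text{sum}}=\sum_{m=1}^n\tau_m=\sum_{m=1}^n(\tau_m-\tau_{m-1})(n-m+1)$; since reaching $S$ completes the $(i+j)$-th client, the increment $(t-t')$ is weighted by exactly $n-(i+j)+1=n-i-j+1$, matching $T(S,S')$, while the initialization weights the first rendezvous by $n$. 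Hence $D(S)$ equals the partial sum of completion times along any path, the recursion retains the minimum such value per state, and the terminal minimisation in Equation~\eqref{eq:DP3Termination} returns the optimal $C_{\text{sum}}$ — the mandatory return to the origin being irrelevant, since it neither enters the objective nor can violate a deadline.

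I expect the main obstacle to be the treatment of time as a state coordinate. Two points demand care. First, the bound placing every rendezvous at time at most $T$ is needed not only for the complexity count but for correctness, since Equation~\eqref{eq:DP3Termination} searches only $0\le t\le T$; I would justify it by showing that among order-preserving routes the alternating pendulum maximises the unaided travel time, and that colliding slow clients can never delay a rendezvous beyond its unaided time. Second, I must argue that restricting $t$ to the $O(T)$ grid of values implied by the unary encoding loses no optimum, i.e. that the order-preserving, wait-free, colliding optimum guaranteed by the structural lemmas is itself realised at those times. Once the state is accepted as a sufficient statistic, the remaining ingredients — the closed-form positions, the deadline test, and the telescoping weight $n-i-j+1$ — are routine.
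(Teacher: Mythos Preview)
Your proposal is correct and follows the paper's two-part structure. The paper's correctness argument is actually much terser than yours: it simply asserts that the recursion enumerates all order-preserving, wait-free, colliding solutions and invokes Lemmas~\ref{lemma:orderPreservingTRP}--\ref{lemma:waitFreeClientTRP}, without spelling out the telescoping identity for the cost weights or the justification of the bound $T$; your treatment of those points is more careful than the original. One small difference in the complexity count: the paper argues via \emph{forward} iteration (from each state at most two successors, each in $O(1)$), which in fact yields $O(Tn^2)$ and then loosens to the stated $O((Tn)^2)$, whereas your backward count of $O(T)$ predecessors per state lands directly on $O((Tn)^2)$; either route suffices for the theorem as stated.
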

	\begin{proof}
		We discuss the time complexity and the correctness of the algorithm separately.
		\medskip 
		
		\noindent \emph{Time complexity.} Note that the state space is bounded by \(O(Tn^2)\). 
		When the algorithm is implemented using forward iteration, there are at most two states reachable, which can be computed in \(O(1)\) time.  All other state transitions correspond to infeasible solutions and can be pruned. 
        Further, observe that in Equation~\eqref{eq:DP3Termination}, we find the minimum across all values of \(T\).
		It follows, that, with a reasonable implementation, the algorithm solves the problem in \(O((Tn)^2)\) time.
		
		\medskip
		\noindent \emph{Correctness.} Based on our discussion, it is clear that the dynamic programming algorithm enumerates all solutions that are order-preserving and with wait-free server and colliding clients, while pruning a path if a state can be reached in an earlier time by an alternative solution and if a state is not reachable.
		By Lemmas~\ref{lemma:orderPreservingTRP},~\ref{lemma:waitFreeServerTRP}, and~\ref{lemma:waitFreeClientTRP}, we know that such an optimal solution must exist and thus, the algorithm must return an optimal solution.
	\end{proof}

	\section{Complexity lower bounds}\label{sec:hardness}
	In the preceding sections we focused on efficient solution algorithms and found that particularly if the clients are faster than the server ($v>1$), the presence of collaboration can make the problems considerably easier to solve than their counterparts without collaboration. In this section we will turn to complexity lower bounds on the hardness of the problem versions and first establish that in the case of slow clients ($0<v<1$) we can build on the complexity results already proven in the literature for the LTSP and LTRP and show that versions of the CLTSP and CLTRP cannot be easier to solve than their counterparts without collaboration as long as $v$ can be arbitrarily small. We will then show that the presence of collaboration can make both the CLTSP and CLTRP considerably harder to solve by proving that finding a feasible solution to the CLTSP with deadlines is NP-complete in the strong sense as long as $v>0$.

    Rather than formal reductions from LTSP and LTRP versions, we will first line out a general strategy for reductions in the case of $0<v<1$. Intuitively, it should be unsurprising that the presence of collaboration will matter less and less the slower the clients can move to assist the server. We can make this intuition more precise by first considering that there is an upper bound $K$ on the length of any reasonable tour of the server for any instance of LTSP and LTRP, which in the most general problem version can for instance be computed as $K=n(r^{max}+2 \cdot s^{max}+\tau^{max})$, where $r^{max}$ is the largest release date, $s^{max}$ is the largest absolute position and $\tau^{max}$ is the largest processing time of the problem instance. This follows immediately since the server will have served each client after at most $r^{max}+2 \cdot s^{max}+\tau^{max}$ time units (after waiting that the client is released, traveling to its location along the line and then processing the client) and the server will only deteriorate the objectives by waiting at any position. For specific problems this bound can of course be tightened considerably, but this will have no impact on the general argument. 
    Thus, results in Sitters \cite{sitters2004complexity} and Tsitsikilis \cite{tsitsiklis1992special} imply the following results.
    
    \begin{corollary}\label{cor:CLTRP}
        The CLTRP with slow clients, zero processing times, and release times is binary NP-hard.
    \end{corollary}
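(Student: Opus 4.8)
The plan is to give a polynomial-time many-one reduction from the (non-collaborative) LTRP with release times and zero processing times, which is binary NP-hard by Sitters \cite{sitters2004complexity}, exploiting the fact that sufficiently slow clients contribute negligibly to the objective. Given such an LTRP instance I may assume, after scaling, that all positions and release dates are integers; then the server moves at unit speed between integer positions, every achievable value of \(C_{\text{sum}}\) is an integer, and in particular any two distinct achievable objective values differ by at least \(1\). I build a CLTRP instance on exactly the same data (same positions, release dates, zero processing times) and set the client speed to \(v \coloneqq \tfrac{1}{5n^2K}\), where \(K\) is the tour-length bound from the preceding discussion. Then \(0<v<1\), so the clients are slow, and since \(K\) has polynomially bounded bit-length under binary encoding, so does \(v\); hence the reduction is computable in polynomial time. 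The target claim is that, for any integer budget \(B\), the LTRP optimum is at most \(B\) if and only if the CLTRP optimum is at most \(B\).

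One direction is immediate: any LTRP solution is a feasible CLTRP solution in which every client simply stays put, and since there are no deadlines feasibility is automatic, so the CLTRP optimum is at most the LTRP optimum. The crux is the reverse estimate, namely that collaboration cannot reduce the objective by as much as \(1/2\). Here I start from an optimal CLTRP solution \((t,x)\) and manufacture a feasible LTRP solution of cost less than \(C_{\text{sum}}(t,x)+1/2\). For each client \(a\), reachability by the client yields \(|s(a)-x(a)| \le v(t(a)-r(a)) \le vK\), so every rendezvous position lies within \(vK\) of the client's original location. Keeping the order in which the server meets clients, I route the server to each original position \(s(a)\) instead of \(x(a)\); each correction inserts an out-and-back detour of length at most \(2vK\) into the server's trajectory, for a total added travel of at most \(2nvK\). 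Since this added travel delays at most \(n\) completion times, each by at most the cumulative inserted length, the sum of completion times grows by at most \(2n^2vK = 2/5 < 1/2\); release constraints remain satisfied because completion times only increase. Hence the LTRP optimum is at most \(C_{\text{sum}}(t,x)+1/2\).

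Combining the two estimates, the CLTRP optimum lies in the interval \((\mathrm{OPT}_{\mathrm{LTRP}}-1/2,\ \mathrm{OPT}_{\mathrm{LTRP}}]\). Because \(\mathrm{OPT}_{\mathrm{LTRP}}\) is an integer, for any integer budget \(B\) we have \(\mathrm{OPT}_{\mathrm{LTRP}} \le B\) exactly when the CLTRP optimum is at most \(B\): if \(\mathrm{OPT}_{\mathrm{LTRP}} \le B\) the CLTRP optimum is no larger, while if \(\mathrm{OPT}_{\mathrm{LTRP}} \ge B+1\) the CLTRP optimum exceeds \(B+\tfrac12 > B\). This establishes the reduction and hence binary NP-hardness. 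I expect the main obstacle to be the objective-difference bound in the second paragraph: one must control precisely how much the forced detours to the original positions inflate the sum of completion times, and simultaneously verify that the resulting choice of \(v\) retains polynomial encoding length so that the reduction is genuinely polynomial-time.
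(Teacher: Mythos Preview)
Your proposal is correct and follows essentially the same strategy as the paper: reduce from Sitters' binary NP-hard LTRP with release times by choosing \(v\) inversely proportional to a polynomial in \(n\) and the tour-length bound \(K\), so that the total benefit of collaboration is strictly less than one unit of the (integer-valued) objective. Your argument is in fact more explicit than the paper's sketch, which only states the choice \(v=\tfrac{1}{\frac{n(n+1)}{2}K+1}\) and asserts the savings bound without constructing the detour-based LTRP solution; the one place to be slightly careful is the bound \(t(a)-r(a)\le K\) in an \emph{optimal} CLTRP solution, which is not entirely immediate, but replacing it by the trivially valid \(t(a)\le nK\) and shrinking \(v\) by one more factor of \(n\) fixes this without affecting the reduction.
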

    
    \begin{corollary}\label{cor:CLTSP}
        The CLTSP with slow clients, general processing times, and release times is binary NP-hard.
    \end{corollary}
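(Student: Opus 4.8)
The plan is to give a polynomial-time reduction from the decision version of the \emph{non-collaborative} LTSP with release times and general processing times — which is binary NP-hard by Tsitsiklis~\cite{tsitsiklis1992special} — to the decision version of the CLTSP with slow clients. Given an LTSP instance together with a makespan bound \(B\), and assuming without loss of generality that all of \(s(a)\), \(r(a)\), \(\tau(a)\) and \(B\) are integers, I would construct a CLTSP instance on the \emph{same} clients, positions, release dates and processing times, equip the clients with a small rational speed \(v\) (fixed below), and keep the same bound \(B\). Since any schedule in which the clients never move is a feasible collaborative schedule, every non-collaborative tour is feasible for the constructed instance, so the optimal collaborative makespan \(C_c^\ast\) satisfies \(C_c^\ast \le C_{nc}^\ast\), where \(C_{nc}^\ast\) is the optimal non-collaborative makespan. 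Because \(v=0\) is disallowed for slow clients, the hardness cannot simply be inherited as a special case; the entire argument instead rests on a matching lower bound showing that slow collaboration can shave off only a negligible amount of makespan.

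Concretely, the key step is to prove \(C_{nc}^\ast - C_c^\ast \le (2n+1)vK\), where \(K=n(r^{max}+2\cdot s^{max}+\tau^{max})\) is the tour-length bound introduced above, so in particular \(C_c^\ast \le C_{nc}^\ast \le K\) and every rendezvous time is at most \(K\). I would take an optimal collaborative solution \((t,x)\) with visiting order \(\sigma\) and turn it into a non-collaborative one by having the server visit each client at its true start position \(s(\sigma_i)\) instead of the rendezvous position \(x_i\), keeping the order and serving greedily. Client reachability in \((t,x)\) forces \(|x_i - s(\sigma_i)| \le v(t_i-r(\sigma_i)) \le vK\), so by the triangle inequality the extra distance travelled on leg \(i\) is at most \(|s(\sigma_i)-x_i| + |s(\sigma_{i-1})-x_{i-1}|\). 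An induction on the completion times — using that in the collaborative solution \(t_i \ge \max\{c_{i-1}+|x_i-x_{i-1}|,\,r(\sigma_i)\}\), so that the per-leg slack accumulates only additively under the monotone \(\max\) — yields that the new completion time of \(\sigma_i\) exceeds \(c_i\) by at most \(2\sum_{k\le i}|x_k-s(\sigma_k)|\); adding the final return leg gives the claimed bound \((2n+1)vK\). Release dates stay satisfied because every arrival is only pushed later, and processing times are unchanged, so the constructed schedule is feasible and non-collaborative.

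With this bound in hand I would set \(v \coloneqq 1/((2n+1)K+1)\), which lies strictly in \((0,1)\), makes the clients slow, and guarantees \((2n+1)vK < 1\); crucially, since \(\log K\) is linear in the input size under binary encoding, the bit-length of \(v\) is polynomial, so the reduction runs in polynomial time. Integrality then closes the loop: if the LTSP instance is a yes-instance, \(C_c^\ast \le C_{nc}^\ast \le B\), whereas if it is a no-instance, \(C_{nc}^\ast \ge B+1\) forces \(C_c^\ast \ge C_{nc}^\ast - (2n+1)vK > B\). Hence the CLTSP instance is a yes-instance if and only if the original LTSP instance is, and binary NP-hardness transfers. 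The main obstacle I anticipate is the benefit bound itself: the inductive bookkeeping of completion times must be done carefully so that the interaction between release-time waiting and the detours to the true positions does not let the accumulated error grow faster than linearly in \(n\), since the whole reduction hinges on keeping \((2n+1)vK\) below the integrality gap of \(1\).
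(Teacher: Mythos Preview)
Your proposal is correct and follows essentially the same approach as the paper: reduce from the non-collaborative LTSP with release times and general processing times (Tsitsiklis), keep the instance unchanged but equip the clients with a tiny rational speed \(v\) of order \(1/(nK)\), and use integrality together with the bound that collaboration can save strictly less than one time unit to make the yes/no answers coincide. Your treatment is in fact more careful than the paper's, which merely asserts the savings bound without the inductive bookkeeping you outline; the paper chooses \(v=1/(nK+1)\) rather than your \(1/((2n+1)K+1)\), but this is an inessential difference in constants.
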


    Consider a feasible solution to an instance of LTSP and LTRP respectively with objective values of $F^LTSP$ and $F^LTRP$. Notice that we can construct a feasible solution to a corresponding instance of the CLTSP and CLTRP instance with the same objective values if the clients simply stay put on their starting positions and the server visits the clients in the same sequence. Of course by moving towards the server the clients can save the server some time, but if we set the velocity of the slow clients to $v^{CTSP}=\frac{1}{n \cdot K +1}$ in the case of the CTSP and $v^{CTRP}=\frac{1}{\frac{n(n+1)}{2} \cdot K +1}$ then it follows that as the server completes its roundtrip in the same sequence total savings have to be less than a single unit and thus $F^LTSP - 1 < F^CLTSP \le F^LTSP$ and $F^LTRP - 1 < F^CLTRP \le F^LTRP$. We now have a general strategy for reducing decision versions of the LTSP and LTRP that ask whether there exists a solution with an objective value smaller or equal than some constant $C$ to corresponding decision versions of CLTSP and CLTRP that ask the same question. If the velocity is set to a sufficiently small value then it follows immediately that there is a solution to the corresponding CLTSP or CLTRP instance if and only if there is such a solution for the LTSP or LTRP instance respectively. Notice that the size of $K$ is polynomially bounded in the largest number in the original LTSP or LTRP instance and thus the reduction can be carried out in polynomial time and can be used to infer NP-hardness in the strong as well as in the ordinary sense. We can thus transfer all known hardness results from the LTSP and LTRP literature to the collaborative cases with slow clients as long as the velocity is not strictly bounded from below. This of course leaves open the question of whether there CLTSP or CLTRP instances which become easier (or harder) for larger values of $v$ which could suitably be studied in future research. 

    In the following we rather seek to establish that collaboration can sometimes make finding even a feasible solution harder irrespective of the concrete velocity $v>0$. For this purpose we will study the feasibility version of the CLTSP/CLTRP problem with deadlines and prove the following:

	\begin{theorem}\label{thm:HardnessStrong}
		The feasibility problem with deadlines is NP-complete in the strong sense.
	\end{theorem}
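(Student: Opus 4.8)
The statement has two parts, and I would treat them separately: membership in NP and strong NP-hardness.

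For membership I would use the service order as the guessed certificate. Given a permutation $\sigma$ of the clients (equivalently, for each client the side $L$ or $R$ it is served from, together with the relative order within each side), deciding whether \emph{some} choice of rendezvous positions meets all deadlines and returns the server to the origin is a polynomial-time feasibility test. Once $\sigma$ is fixed, the server's motion is a pendulum whose leg directions are dictated by the $L/R$ pattern of $\sigma$, so the reachability constraints between consecutive rendezvous and the collision constraints of the clients become linear in the $(t_i,x_i)$, with the deadlines entering as upper bounds. I would then propagate, deadline by deadline, the interval of server positions still consistent with all earlier constraints --- exactly in the spirit of the interval propagation used in the proof of \Cref{thm:TSPnoProcTimeWindow} --- and declare the order feasible iff no interval becomes empty and the origin remains reachable at the end. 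Since this check runs in polynomial time and $\sigma$ has a linear-size description, the feasibility problem lies in NP; this is the routine direction.

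For hardness I would reduce from the strongly NP-complete $3$-Partition problem: given positive integers $a_1,\dots,a_{3m}$ with $\sum_i a_i = mB$ and $B/4 < a_i < B/2$, decide whether they split into $m$ triples each summing to $B$. From such an instance, and for the fixed velocity $v$, I would build a feasibility instance with zero release dates, deadlines, and general processing times consisting of (i) $3m$ \emph{item clients}, one per integer, placed so that serving one costs essentially only its processing time $\propto a_i$ (e.g.\ at the origin), and (ii) a chain of $m$ \emph{anchor clients} at strictly increasing right positions $s_1<\dots<s_m$ with strictly increasing, tightly chosen deadlines $D_1<\dots<D_m$. The anchors would be designed so that meeting every anchor deadline and returning to the origin forces the server into $m$ consecutive ``shuttle'' phases, where the free time inside phase $k$ (after the unavoidable travel to anchor $k$ and back) is exactly a processing budget of $B$. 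All deadlines can then be met if and only if the item clients can be distributed among the $m$ phases with total processing exactly $B$ in each, i.e.\ iff the $3$-Partition instance is a yes-instance; the condition $B/4<a_i<B/2$ forces each full phase to contain exactly three items. Both directions would be verified by the same phase-budget accounting, and since all positions, deadlines and processing times are polynomially bounded in $B$ and $m$, the reduction is polynomial and yields strong NP-hardness.

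The main obstacle is designing and analysing the anchor chain so that its rigidity survives collaboration, which here is a double-edged tool. On one hand the anchors must remain reachable, so their deadlines cannot be so tight that no client could possibly meet the server; on the other hand the clients' freedom to move toward the server must not create unintended slack that lets the server clear an overloaded phase on time. Concretely, I expect the delicate step to be choosing $s_k$ as a function of $v$ so that the \emph{earliest feasible} co-location with anchor $k$ --- pinned by the collision line $x = s_k - v t$ for right clients together with the server's reachability --- is forced to a value leaving precisely a $B$-sized budget in each phase, independently of how the clients choose to move. Ruling out exotic trajectories (meeting a far anchor early by having it rush leftward, or shifting an item across a phase boundary) that beat this budget is where the bulk of the work lies. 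This pinning is exactly where collaboration is essential: without it the earliest meeting with anchor $k$ would be governed by the server physically travelling to $s_k$, changing the budget arithmetic entirely, which is consistent with the non-collaborative deadline feasibility problem remaining open. Establishing that the $B$-budget is enforced for \emph{every} $v>0$, rather than only for a conveniently small $v$, is what makes the full generality of the claim nontrivial.
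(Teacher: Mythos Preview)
Your high-level plan---reduce from $3$-Partition, encode the integers as processing times of clients sitting at the origin, and use a family of zero-processing separator clients with tight deadlines to force the schedule into $m$ blocks of size $B$---is exactly the paper's strategy, and your NP-membership argument is more explicit than the paper's (which simply asserts it). Where you diverge is in the design of the separators.

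You place the anchors at increasing right positions and envisage the server shuttling out to meet them, then spend your ``main obstacle'' paragraph worrying (correctly) about $v$-dependent tuning and about exotic trajectories: anchors rushing left, item clients following the server right, work slipping across phase boundaries. The paper avoids all of this with one observation you are missing: let the separators come to the server. It places the $m-1$ distant clients at positions with $|s_{3m+j}|/v=jB$ and deadline $d_{3m+j}=jB$, so each one reaches the origin precisely at its deadline. Since the item clients have total processing $mB$ and common deadline $mB$, the server cannot afford a single unit of travel; it must sit at the origin and process continuously on $[0,mB]$. Hence the distant clients must be met at the origin at times $B,2B,\dots,(m-1)B$, which forces the item processing to pack exactly $B$ into each of the $m$ intervals---the $3$-Partition. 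This works uniformly for every $v>0$ with no tuning at all.

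So your proposal is not wrong, but the construction you sketch creates genuine difficulties that the paper's construction simply does not have; the idea that collaboration can be exploited \emph{in the reduction} by moving the separators rather than the server is the missing simplification.
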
 
	
	\begin{proof}
		To prove that the feasibility problem with deadlines is NP-complete in the strong sense, we make use of a reduction from the 3-Partition Problem which is well known to be NP-complete in the strong sense, see Garey and Johnson \cite{garey2002computers}, and is stated as follows: Given \(3m\) integers \(A = \{a_1, a_2, \ldots, a_{3m}\}\) and a positive integer \(B\) such that \(\sum_{i=1}^{3m} a_i = mB\) and \(B/4 < a_i < B/2\) for all \(i\) is there a partition of \(A\) into \(m\) subsets \(A_1,A_2,\ldots, A_m\) such that it holds that \(\sum_{a \in A_j} a = B\) for \( j=1,\ldots,m\).
		
		We construct an instance of the feasibility problem with deadlines as follows. Introduce \(3m\) adjacent clients located at the origin with \(s_i = 0\) for \(i=1,\ldots,3m\), corresponding to the integers in \(A\) with a processing time of \(\tau_i = a_i\) for \(i=1,\ldots,3m\) and deadlines of \(d_i = m \cdot B\). Additionally, introduce \(m-1\) distant clients with a processing time of \(0\) and deadlines of \(d_{3m+j} = j \cdot B\) for \(j=1,\ldots,m-1\). 
        The velocity and distance of the distant clients are set such that \(|{s_{3m+j}}/{v}| = j \cdot B \) for \( j=1,\ldots,m-1\), which means that they arrive at position 0 after exactly \(1 \cdot B, 2 \cdot B, \ldots, (m-1) \cdot B\) time units if they move towards this point at full speed.
		
		As can be readily seen, a feasible solution requires that the server is free at time points \(1 \cdot B, 2 \cdot B, \ldots, (m-1) \cdot B\) in order to process the arriving distant clients which are due immediately. Given that the adjacent clients also need to be processed without any delay in order to service them all within \(3m\) time units, it follows that three adjacent clients with a total processing time of \(B\) need to be serviced in between any distant client which yields the required 3-partition. 
		
		As the problem is in NP, it follows that it is NP-completeness in the strong sense.
	\end{proof}
	
	\section{Conclusion}\label{sec:conclusion}
	In this work we extended the line Travelling Salesman and line Travelling Repairman problems by considering mobile clients that seek to cooperate with the server. We analyzed structural properties of several problem versions and identified efficient solution algorithms for specific problem versions and hardness results for others thereby also clarifying relationship between the original and collaborative problem versions. As could be seen, the presence of collaboration introduces interesting and new formal structures that can be exploited by dedicated solution algorithms.
    There are still some open questions with respect to important problem versions of the CLTSP but in particular the CLTRP, which could be tackled in future research projects, even though filling some of these gaps might require additonal insights into the problem structure of the standard LTRP. Further natural extensions of our research could investigate problem versions with more than one dimension of movement or multiple servers. 

	\paragraph{Acknowledgments.} Julian Golak received financial support from dtec.bw – the Digitalization and Technology Research Center of the Bundeswehr, which is funded by the European Union through the NextGenerationEU program.
	
	\bibliographystyle{abbrvnat}
	\bibliography{references.bib}

    \section*{Statements and Declarations}
\textbf{Competing Interests:} The authors declare that they have no competing interests.
\end{document}